\documentclass[11pt]{article}
\usepackage[margin=1.25in]{geometry}
\usepackage{ifthen}
\usepackage{graphicx}
\usepackage{epsfig}
\usepackage{amsfonts,dsfont,amssymb,amsthm,stmaryrd}
\usepackage[cmex10]{amsmath} 
\usepackage{hyperref}
\hypersetup{colorlinks=true,pdftitle="",pdftex}


\newcommand{\bc}{\begin{center}}
\newcommand{\ec}{\end{center}}
\newcommand{\bt}{\begin{tabular}}
\newcommand{\et}{\end{tabular}} 
\newcommand{\bea}{\begin{eqnarray}}
\newcommand{\eea}{\end{eqnarray}}
\newcommand{\bean}{\begin{eqnarray*}}
\newcommand{\eean}{\end{eqnarray*}}

\newcommand{\ba}{\begin{array}}
\newcommand{\ea}{\end{array}}

\def\be{\begin{eqnarray}}
\def\ee{\end{eqnarray}}
\def\ben{\begin{eqnarray*}}
\def\een{\end{eqnarray*}}
















\def\elabel#1{\label{e:#1}}

%
 
\def\sq{$\Box$}

\def\qed{\ifmmode\sq\else{\unskip\nobreak\hfil
\penalty50\hskip1em\null\nobreak\hfil\sq
\parfillskip=0pt\finalhyphendemerits=0\endgraf}\fi\par\medbreak}


\newsavebox{\junk}
\savebox{\junk}[1.6mm]{\hbox{$|\!|\!|$}}
















\def\til={{\widetilde =}}



\def\half{{\mathchoice{\textstyle \frac{1}{2}}%
{\frac{1}{2}}%
{\hbox{\tiny $\frac{1}{2}$}}%
{\hbox{\tiny $\frac{1}{2}$}} }}

 \def\eq#1/{(\ref{#1})}


%

\def\eq#1/{(\ref{e:#1})}

\newcommand{\beqn}[1]{\notes{#1}%
\begin{eqnarray} \elabel{#1}}

\newcommand{\eeqn}{\end{eqnarray} }

\newcommand{\beq}[1]{\notes{#1}%
\begin{equation}\elabel{#1}}

\newcommand{\eeq}{\end{equation}} 

\def\bdes{\begin{description}}
\def\edes{\end{description}}



 
\def\notes#1{}



\newcommand\independent{\protect\mathpalette{\protect\independent}{\perp}} 
\def\independent#1#2{\mathrel{\rlap{$#1#2$}\mkern2mu{#1#2}}}

\newcommand{\mR}{\mathbb{R}} 

\newcommand{\mZ}{\mathbb{Z}}

\newcommand{\mP}{\mathbb{P}}
\newcommand{\mE}{\mathbb{E}}
\newcommand{\mF}{\mathbb{F}}

\newcommand{\X}{\mathcal{X}}   
\newcommand{\Y}{\mathcal{Y}}

\newcommand{\e}{\varepsilon}

\theoremstyle{definition}
\theoremstyle{plain}
\newtheorem{thm}{Theorem}
\theoremstyle{plain}
\newtheorem{prop}{Proposition}
\theoremstyle{plain}
\newtheorem{lemma}{Lemma}
\theoremstyle{plain}
\newtheorem{corol}{Corollary}
\theoremstyle{plain}
\theoremstyle{remark}
\newtheorem{remark}{Remark}
\theoremstyle{remark}
\theoremstyle{plain}

\hyphenation{op-tical net-works semi-conduc-tor}


\begin{document}

\title{Entropies of weighted sums in cyclic groups and an application to
polar codes}
\author{Emmanuel Abbe, Jiange Li, Mokshay Madiman}
\maketitle


\begin{abstract}
In this note, the following basic question is explored: in a cyclic group, 
how are the Shannon entropies of the sum and difference of i.i.d. random variables related to each other? 
For the integer group, we show that they can differ by any real number additively, but not too much multiplicatively;
on the other hand, for $\mZ/3\mZ$, the entropy of the difference is always at least as large as that of the sum.
These results are closely related to the study of more-sum-than-difference (i.e. MSTD) sets in additive combinatorics.
We also investigate polar codes for $q$-ary input channels using non-canonical kernels to construct the generator matrix,
and present applications of our results to constructing polar codes with significantly improved error probability
compared to the canonical construction.
%
\end{abstract}



\section{Introduction}

For a discrete random variable $X$ supported on a countable set $A$, its Shannon entropy $H(X)$ is defined to be
\begin{align}
H(X)=-\sum_{x\in A}\mP(X=x)\log\mP(X=x).
\end{align}
The Shannon entropy can be thought of as the logarithm of the effective 
cardinality of the support of $X$; the justification for this interpretation comes 
from the fact that when the alphabet $A$ is finite, 
$H(X)\leq \log |A|$, with equality if and only if $X$ is uniformly distributed on $A$.
This suggests an informal parallelism between entropy inequalities and set cardinality inequalities
that has been extensively explored for projections of subsets of Cartesian product sets (see, e.g.,
\cite{MT10} for a review of these and their applications to combinatorics), and more recently
for sums of subsets of a group that are of great interest in the area of additive combinatorics  \cite{TV06:book}.   
For two finite subsets $A, B$ of an additive group, the sumset $A+B$ and difference set $A-B$ are defined by
$$
A+B:=\{a+b: a\in A, b\in B\},
$$
and
$$
A-B:=\{a-b: a\in A, b\in B\}.
$$
In the trivial bound
$\max\{|A|, |B|\}\leq |A\pm B|\leq|A||B|$,
replacing the sets $A, B$ by independent discrete random variables $X, Y$ and replacing the log-cardinality 
of each set by the Shannon entropy, one obtains the entropy analogue 
\be
\max\{H(X), H(Y)\}\leq H(X\pm Y)\leq H(X)+H(Y). \label{eq: H upper-lower bound}
\ee
This is, of course, an analogy but not a proof; however, the inequality \eqref{eq: H upper-lower bound}
can be seen to be true from elementary properties of entropy.

First identified by Ruzsa \cite{Ruz09}, this connection between entropy inequalities and cardinality inequalities
in additive combinatorics has been studied extensively in the last few years. Useful tools in additive 
combinatorics have been developed in the entropy setting, such as Pl\"{u}nnecke-Ruzsa inequalities
by Madiman, Marcus and Tetali \cite{MMT12}, 
and Freiman-Ruzsa and Balog-Szemer\'edi-Gowers theorems by Tao \cite{Tao10}.
Much more work has also recently emerged on related topics, such as 
efforts towards an entropy version of the Cauchy-Davenport inequality \cite{HAT14, JA14, WWM14:isit, WM15:isit},
an entropy analogue of the doubling-difference inequality \cite{MK10:isit},
extensions from discrete groups to locally compact abelian groups \cite{KM14, MK15},
and applications of additive combinatorics in information theory \cite{LP08:ieeei, Mad08:itw, CZ08, EO09, WSV12}.


In an abelian group, since addition is commutative while subtraction is not, two generic elements generate one sum but two differences. 
Likely motivated by this observation, 
the following conjecture is contained in H. T. Croft's Research Problems, 1967:
\begin{quote}
``Let $A =\{a_1, a_2,\ldots, a_N\}$ be a finite set of integers, and define
$A+A = \{a_i+a_j:  1 \leq i, j \leq N\}$ and $A-A = \{a_i-a_j:  1 \leq i, j \leq N\}$.
Prove that $A-A$ always has more members than $A+A$, unless $A$ is symmetric about 0.''
\end{quote}
However, that is not always the case. In 1969, I. J. Marica \cite{Mar69} showed that the conjecture 
(which he attributed to J. H. Conway) is false by exhibiting the set 
$A = \{1, 2, 3, 5, 8, 9, 13, 15, 16\}$, for which $A+A$ has 30 elements and $A-A$ has 29 elements.
According to Nathanson \cite{Nat07:2}, Conway himself had already found the MSTD set $\{0, 2, 3, 4, 7, 11, 12, 14\}$ in the 1960's,
thus disproving the conjecture attributed to him.
Subsequently, S. Stein \cite{Ste73} showed that one can construct sets $A$ for which the ratio $|A-A|/|A+A|$ is
as close to 0 or as large as we please; apart from his own proof, he observed that such constructions also follow by adapting
arguments in an earlier work of S. Piccard \cite{Pic42} that focused on the Lebesgue measure
of $A+A$ and $A-A$ for subsets $A$ of $\mathbb{R}$. 
A stream of recent papers aims to quantify how rare or frequent MSTD sets are
(see, e.g., \cite{MO07, HM09} for work on the integers, and \cite{Zha10:2} for finite abelian groups
more generally), or try to provide denser constructions of infinite families of MSTD sets (see, e.g., \cite{MOS10, Zha10:3});
however these are not directions we will explore in this note.


Since convolutions of uniforms are always distributed on the sumset
of the supports, but are typically not uniform distributions, it is not immediately obvious from 
the Conway and Marica constructions whether there exist i.i.d. random variables
$X$ and $Y$ such that $H(X+Y)>H(X-Y)$. The purpose of this note is to explore this and related questions. For example, one natural related
question to ask is for some description of the coefficient $\lambda\in \{1, \ldots, |G|\}$ that maximizes
$H(X+\lambda Y)$ for $X, Y$ drawn i.i.d. from some distribution in $G$; restricting the choice of 
coefficients to $\{+1, -1\}$ would correspond to the sum-difference question. 
This question is motivated by applications
to the class of polar codes, which is a very promising class of codes that has attracted much recent attention
in information and coding theory.
Specifically, we show that over $\mF_q$, the ``spread'' of the polar martingale can be significantly 
enlarged by using 
optimized kernels rather than the original kernel $\bigl[\begin{smallmatrix} 
      1 & 0 \\
      1 & 1 \\
   \end{smallmatrix}\bigr]$. In some cases, this leads to significant improvements on the error probability of polar codes, 
even at low block lengths like 1024. We also consider additive noise channels, and show that the improvement 
is particularly significant when the noise distribution is concentrated on ``small'' support. 

This note is organized as follows. In Section~\ref{sec:examples}, we show that entropies of sums (of i.i.d. random variables) are never greater than
entropies of differences for random variables taking values in the cyclic group $\mZ/3\mZ$; however this fails for larger groups, and in particular we show that there always exist distributions on finite cyclic groups of order at least 21
such that $H(X+Y)>H(X-Y)$. In Section \ref{sec:additive} and Section \ref{sec:multiplicative}, we explore more quantitative questions-- that is, we ask not only
what the ordering of $H(X+Y)$ and $H(X-Y)$ may be, but how different these can be in either direction; the finding here is that
on $\mZ$, these can differ by arbitrarily large amounts additively, but not too much multiplicatively. 
Finally, in Section~\ref{sec: polar}, we explore the question about
entropies of weighted sums mentioned at the end of the previous paragraph,
and describe the applications to polar codes as well.

\section{Comparing entropies of sums and differences} 


\subsection{Basic examples} \label{sec:examples}

We start by considering the smallest group in which the sum and difference are distinct, namely $\mZ/3\mZ$.
Let $p=(p_0, p_1, p_2)$ be a probability distribution on $\mZ/3\mZ$, and let $H(p)$ be its Shannon entropy. We denote by $\|p-U\|_2$ the Euclidean distance between $p$ and the uniform distribution $U=(\frac{1}{3}, \frac{1}{3}, \frac{1}{3})$. For any fixed $0\leq t\leq\log3$, the following lemma verifies the ``triangular'' shape of the entropy circle $H(p)=t$. 

\begin{lemma}\label{lem:triangular}
Let $p$ be a probability distribution on the entropy circle $H(p)=t$ such that $p_0\geq p_1\geq p_2$. Then the distance $\|p-U\|_2$ is an increasing function of $p_0$.
\end{lemma}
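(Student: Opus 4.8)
\emph{Proof proposal.} The plan is to reduce the statement to a one-variable monotonicity claim and then to an elementary exponential inequality. Only $0<t<\log 3$ needs treatment (for $t=0$ or $t=\log 3$ the ordered part of the circle is a single point). First, since
\[
\|p-U\|_2^2=\sum_i\Big(p_i-\tfrac13\Big)^2=p_0^2+p_1^2+p_2^2-\tfrac13,
\]
and $x\mapsto x^2$ is increasing on $[0,\infty)$, it suffices to show that $p_0^2+p_1^2+p_2^2$ is increasing along the arc $\clA_t:=\{p:\ H(p)=t,\ p_0\ge p_1\ge p_2\}$ as $p_0$ grows. Along $\clA_t$ the two constraints $p_0+p_1+p_2=1$ and $H(p)=t$ leave one degree of freedom, and given $p_0$ the pair $(p_1,p_2)$ is uniquely determined: with $s:=1-p_0$ fixed, the map $q\mapsto -q\log q-(s-q)\log(s-q)$ is strictly decreasing on $[\tfrac{s}{2},s]$, so $H(p)=t$ pins down $p_1\ge p_2$. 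Hence $p_1,p_2$ — and thus $\|p-U\|_2$ — are genuine functions of $p_0$ on an interval of values, and one must show $\frac{d}{dp_0}\big(p_0^2+p_1^2+p_2^2\big)\ge 0$ there.

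Next I would differentiate the two constraints. Writing $p_1',p_2'$ for $dp_1/dp_0,\,dp_2/dp_0$, they give $p_1'+p_2'=-1$ and $(\log p_1)p_1'+(\log p_2)p_2'=-\log p_0$, hence $p_1'=\frac{\log(p_2/p_0)}{\log(p_1/p_2)}$ and $p_2'=\frac{\log(p_0/p_1)}{\log(p_1/p_2)}$. Substituting into $\frac{d}{dp_0}(p_0^2+p_1^2+p_2^2)=2p_0+2p_1p_1'+2p_2p_2'$ yields
\[
\frac{d}{dp_0}\big(p_0^2+p_1^2+p_2^2\big)=\frac{2}{\log(p_1/p_2)}\Big[\,p_0\log\tfrac{p_1}{p_2}+p_1\log\tfrac{p_2}{p_0}+p_2\log\tfrac{p_0}{p_1}\,\Big]=:\frac{2f(p_0,p_1,p_2)}{\log(p_1/p_2)}.
\]
As $p_1\ge p_2$, the denominator is nonnegative, so the whole problem reduces to proving $f\ge 0$ whenever $p_0\ge p_1\ge p_2>0$.

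The crux is this inequality. I would rewrite $f=(p_0-p_1)\log\tfrac{p_1}{p_2}-(p_1-p_2)\log\tfrac{p_0}{p_1}$ and then substitute $x:=\log(p_0/p_1)\ge 0$, $y:=\log(p_1/p_2)\ge 0$, so that $p_0-p_1=p_1(e^{x}-1)$ and $p_1-p_2=p_1(1-e^{-y})$. Then $f=p_1\big[(e^{x}-1)y-(1-e^{-y})x\big]$, and for $x,y>0$ the bracket is nonnegative because $\frac{e^{x}-1}{x}\ge 1\ge\frac{1-e^{-y}}{y}$ (equivalently $e^{x}\ge 1+x$ and $1-e^{-y}\le y$), with strict inequality once $x,y>0$. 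Thus $f\ge 0$ always, and $f>0$ on the interior of $\clA_t$, where $p_0>p_1>p_2>0$; this gives strict monotonicity on the interior, and the monotonicity on the whole interval follows by continuity.

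I expect the computation of $p_1',p_2'$ and of the derivative to be routine bookkeeping; the real obstacle is isolating and proving $f\ge 0$, in particular spotting that the exponential substitution collapses it to the one-line comparison $\frac{e^{x}-1}{x}\ge\frac{1-e^{-y}}{y}$. One should also verify that no interior point of $\clA_t$ satisfies $p_0=p_1$ or $p_1=p_2$ — these occur only at the endpoints of the arc, since $H$ is strictly monotone along each of the lines $\{p_0=p_1\}$ and $\{p_1=p_2\}$ in the relevant range — so that the division by $\log(p_1/p_2)$ and the logarithms appearing in $f$ are legitimate on the interior. (A majorization argument seems not to work directly here: along $\clA_t$ one finds $p_2'\ge 0$, so larger $p_0$ does not make $p$ majorize the distribution at smaller $p_0$; the direct differentiation above appears to be the cleanest route.)
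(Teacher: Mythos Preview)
Your proposal is correct and follows essentially the same route as the paper: both arguments implicitly differentiate the two constraints to obtain the same formulas for $dp_1/dp_0$ and $dp_2/dp_0$, then show the derivative of $\|p-U\|_2^2$ along the arc is nonnegative. The only cosmetic difference is in the last step: the paper factors the numerator as $(p_0-p_1)(p_0-p_2)\big(\tfrac{\log p_0-\log p_2}{p_0-p_2}-\tfrac{\log p_0-\log p_1}{p_0-p_1}\big)$ and invokes concavity of $\log$ (decreasing chord slopes), whereas you rewrite the same numerator as $(p_0-p_1)\log\tfrac{p_1}{p_2}-(p_1-p_2)\log\tfrac{p_0}{p_1}$ and reduce via the exponential substitution to $\tfrac{e^x-1}{x}\ge 1\ge \tfrac{1-e^{-y}}{y}$ --- an equivalent formulation of the same concavity fact.
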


\begin{proof}
If $t=0$, then $p$ has to be the deterministic distribution $(1, 0, 0)$. In this case, we have $\|p-U\|_2=\sqrt{2/3}$. If $t=\log 3$, we have $p=U$ and $\|p-U\|_2=0$. In the following, we may assume that $0<t<\log 3$. The condition $p_0+p_1+p_2=1$ yields
\begin{align}\label{eq:diff-ident1}
1+\frac{d p_1}{d p_0}+\frac{d p_2}{d p_0}=0.
\end{align}
The entropy identity $H(p)=t$ implies
\begin{align}\label{eq:diff-ident2}
(\log p_0+1)+(\log p_1+1)\frac{d p_1}{d p_0}+(\log p_2+1)\frac{d p_2}{d p_0}=0
\end{align}
The above two identities give us that
\begin{align}\label{eq:dp1}
\frac{d p_1}{d p_0}=\frac{\log p_0-\log p_2}{\log p_2-\log p_1}
\end{align}
and
\begin{align}\label{eq:dp2}
\frac{d p_2}{d p_0}=\frac{\log p_0-\log p_1}{\log p_1-\log p_2}.
\end{align}
Using identities \eqref{eq:diff-ident1}, \eqref{eq:dp1} and \eqref{eq:dp2}, we have
\begin{eqnarray*}
\frac{1}{2}\cdot\frac{d}{dp_0}\|p-U\|_2 &=&\sum_{i=0}^2\left(p_i-\frac{1}{3}\right)\frac{d p_i}{d p_0}\\
&=& p_0+p_1\frac{\log p_0-\log p_2}{\log p_2-\log p_1}+p_2\frac{\log p_0-\log p_1}{\log p_1-\log p_2}\\
&=&(p_0-p_1)\frac{\log p_0-\log p_2}{\log p_1-\log p_2}-(p_0-p_2)\frac{\log p_0-\log p_1}{\log p_1-\log p_2}\\
&=&\frac{(p_0-p_1)(p_0-p_2)}{\log p_1-\log p_2}\left(\frac{\log p_0-\log p_2}{p_0-p_2}-\frac{\log p_0-\log p_1}{p_0-p_1}\right)\\
&\geq&0
\end{eqnarray*}
The last inequality follows from the assumption that $p_0\geq p_1\geq p_2$ and the concavity of the logarithmic function.
\end{proof}

Now we can show that the entropy of the sum of two i.i.d. random variables taking values in $\mZ/3\mZ$ can never exceed the entropy of their difference.
We use basic facts about the Fourier transform on finite groups, which can be found, e.g., in \cite{SS03:1:book}.

\begin{thm}\label{thm:Z3}
Let $X, Y$ be i.i.d.\ random variables taking values in $\mZ/3\mZ$, then we have
\begin{align}
H(X+Y) \leq H(X-Y).
\end{align}
\end{thm}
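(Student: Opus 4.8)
The plan is to reduce the theorem to a one‑dimensional comparison controlled by the Euclidean distance to the uniform law $U$, and then feed in Lemma~\ref{lem:triangular}.

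Write $p=(p_0,p_1,p_2)$ for the common law of $X$ and $Y$, and $\tilde p=(p_0,p_2,p_1)$ for its reflection $\tilde p(j)=p(-j)$. Then $X+Y$ has law $S:=p*p$ and $X-Y$ has law $D:=p*\tilde p$, convolutions on $\mZ/3\mZ$. Passing to the Fourier transform on $\mZ/3\mZ$, with $\widehat q(k)=\sum_j q_j\omega^{jk}$ and $\omega=e^{2\pi\sqrt{-1}/3}$, the convolution theorem gives $\widehat S(k)=\widehat p(k)^2$ and $\widehat D(k)=\widehat p(k)\,\overline{\widehat p(k)}=|\widehat p(k)|^2$, so $|\widehat S(k)|=|\widehat D(k)|=|\widehat p(k)|^2$ for every $k$. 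By Plancherel, $\|S-U\|_2^2=\tfrac13\sum_{k\neq 0}|\widehat S(k)|^2=\tfrac13\sum_{k\neq 0}|\widehat D(k)|^2=\|D-U\|_2^2=:r^2$. Moreover $D_1=D_2$ (since $D$ is its own reflection) and $D_0=\mP(X=Y)=\sum_j p_j^2\ge \tfrac13$; hence $D_1\le\tfrac13\le D_0$, i.e.\ $D$ has the ``peaked'' shape $D=(1-2b_0,b_0,b_0)$ with $b_0=\tfrac13-\tfrac{r}{\sqrt 6}\le\tfrac13$.

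It therefore suffices to prove the following claim: \emph{among all probability laws $q$ on $\mZ/3\mZ$ with $\|q-U\|_2=r$, the entropy $H(q)$ is largest at the peaked law $(1-2b_0,b_0,b_0)$}. Granting this, since $\|S-U\|_2=\|D-U\|_2=r$ and $D$ itself is exactly that peaked law, we get $H(X+Y)=H(S)\le H(D)=H(X-Y)$. To prove the claim, fix $q$ with $0<H(q)<\log 3$ (the boundary cases being trivial) and, using that $H$ and $\|\cdot-U\|_2$ are permutation‑invariant, assume $q_0\ge q_1\ge q_2$. Inside the sector $\{p_0\ge p_1\ge p_2\}$ the entropy level set $\{H(p)=H(q)\}$ is an arc, parametrized monotonically by $p_0$, whose two endpoints lie on the boundary of that sector: one of them is always the peaked law $W$ with $H(W)=H(q)$ (it sits in the interior of the face $\{p_1=p_2\}$, and exists because $b\mapsto H(1-2b,b,b)$ is a bijection onto $[0,\log 3]$), while the other lies on $\{p_0=p_1\}$ (if $H(q)>\log 2$) or on $\{p_2=0\}$ (if $H(q)\le\log 2$). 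A short computation — at a fixed first coordinate $v>\tfrac13$, splitting the remaining mass $1-v$ into two equal halves strictly beats the profiles $(v,v,1-2v)$ and $(v,1-v,0)$ in entropy, and all three of these entropies decrease in $v$ — shows that, at the common entropy level $H(q)$, the peaked endpoint $W$ has the largest value of $p_0$ among the two endpoints, hence the largest $p_0$ on the whole arc. By Lemma~\ref{lem:triangular}, $\|p-U\|_2$ is increasing in $p_0$ along this arc, so it is maximized at $W$; in particular $r=\|q-U\|_2\le\|W-U\|_2$. Finally $b\mapsto\|(1-2b,b,b)-U\|_2=\sqrt6(\tfrac13-b)$ is decreasing while $b\mapsto H(1-2b,b,b)$ is increasing, so $\|W-U\|_2\ge r=\|(1-2b_0,b_0,b_0)-U\|_2$ forces $H(q)=H(W)\le H(1-2b_0,b_0,b_0)$, which is the claim.

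The main obstacle is exactly this last step: converting Lemma~\ref{lem:triangular} (distance to uniform increases with $p_0$ along an entropy level set) into its dual form (the peaked laws are the entropy‑maximizers at a fixed distance to $U$). This requires the small case analysis of which face of the sector $\{p_0\ge p_1\ge p_2\}$ the entropy arc exits through, together with the elementary fact that, among profiles on $\mZ/3\mZ$ with a prescribed peak mass, the symmetric profile $(1-2b,b,b)$ has the most entropy (so at a fixed entropy level it has the largest peak coordinate). By contrast, the Fourier identity $\|S-U\|_2=\|D-U\|_2$ and the identification of $D$ as a peaked law are routine.
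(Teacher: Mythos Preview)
Your proof is correct and follows essentially the same route as the paper: use the Fourier transform on $\mZ/3\mZ$ to show $\|S-U\|_2=\|D-U\|_2$, observe that $D$ is the ``peaked'' law on its Euclidean circle, and then invoke Lemma~\ref{lem:triangular} to conclude that $D$ has the largest entropy on that circle. The only difference is that you spell out in detail the ``dual'' step (that the peaked law has the largest $p_0$ on its entropy arc, hence maximizes entropy at a given Euclidean distance from $U$), which the paper leaves implicit in the sentence ``the entropy circle passing through $p\star q$ lies inside the Euclidean circle''; your extra case analysis on which face of the sector the arc exits through is a legitimate way to justify that step.
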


\begin{proof}
Let $p=(p_0, p_1, p_2)$ be the distribution of $X$. Since $Y$ is an independent copy of $X$, we can see that $-Y$ has distribution $q=(p_0, p_2, p_1)$. Then the distributions of $X+Y$ and $X-Y$ can be written as $p\star p$ and $p\star q$, respectively, where `$\star$' is the convolution operation. Let $\widehat{p}=(\widehat{p}_0, \widehat{p}_1, \widehat{p}_2)$ be the Fourier transform of $p$ with Fourier coefficients defined by
$$
\widehat{p}_j=\sum_{k=0}^2p_ke^{-i2\pi jk/3},~~j=0, 1, 2.
$$
One basic property of the Fourier transform asserts that
\begin{align}\label{eq:conjugate}
\widehat{q}_j=\overline{\widehat{p}_j},
\end{align}
where $\overline{\widehat{p}_j}$ is is the conjugate of $\widehat{p}_j$. We also have
\begin{align}\label{eq:f-con}
(\widehat{p\star q})_j=\widehat{p}_j\cdot\widehat{q}_j,
\end{align}
which holds for general distributions $q$. The Parseval-Plancherel identity says
\begin{align}\label{eq:pp}
\|\widehat{p}\|_2^2=3\|p\|_2^2.
\end{align}
Using the identities \eqref{eq:conjugate}, \eqref{eq:f-con} and \eqref{eq:pp}, we have
$$
\|p\star p\|_2 = \|p\star q\|_2,
$$
which implies
$$
\|p\star p-U\|_2=\|p\star q-U\|_2.
$$ 
It is not hard to see that $X-Y$ is symmetric with $(p\star q)_0\geq(p\star q)_1=(p\star q)_2$.
Using Lemma \ref{lem:triangular}, we can see that the entropy circle passing through $p\star q$ lies inside the Euclidean circle centered at $U$ with radius $\|p\star q-U\|_2$. Thus the distribution $p\star p$ is on an entropy circle with entropy not greater than $H(p\star q)$. Then we have the desired statement.
\end{proof}



The property in Theorem \ref{thm:Z3} fails to hold for larger cyclic groups; we demonstrate this 
by discussing three specific examples of i.i.d. random variables $X, Y$ such that the 
entropy of their sum is larger than the entropy of their difference. 

\begin{enumerate}

\item 
For Conway's MSTD set $A=\{0, 2, 3, 4, 7, 11, 12, 14\}$, we have $|A+A|=26$ and $|A-A|=25$. 
Let $X, Y$ be independent random variables uniformly distributed on $A$. Straightforward calculations show that
$$
H(X+Y)-H(X-Y)=\frac{1}{64}\log\frac{282429536481}{215886856192}>0.
$$


\item 
The second example is based on the set $A=\{0, 1, 3, 4, 5, 6, 7, 10\}$ with $|A+A|=|A-A|=19$. Let $X, Y$ be independent random variables uniformly distributed on $A$. Then we have
$$
H(X+Y)-H(X-Y)=\frac{1}{64}\log\frac{5^{10}\cdot8^{10}}{3^6\cdot 7^7}>0.
$$


\item 
The group $\mZ/12\mZ$ is the smallest cyclic group that contains a MSTD set. Let $A=\{0, 1, 2, 4, 5, 9\}$. It is easy to check that $A$ is a MSTD set since $A+A=\mZ/12\mZ$ and $A-A=(\mZ/12\mZ)\backslash\{6\}$. We let $X, Y$ be independent random variables uniformly distributed on $A$. Then we have
$$
H(X+Y)-H(X-Y)=\frac{1}{36}\log\frac{3^{34}}{20^{10}}>0.
$$
\end{enumerate}

\begin{remark}
Applying linear transformations, we can get infinitely many MSTD sets of $\mZ$ from Conway's MSTD set. 
Correspondingly, one can get as many ``MSTD'' random variables as one pleases. 
Thus MSTD sets are useful in the construction of ``MSTD'' random variables;
however we can also construct ``MSTD'' random variables supported on non-MSTD sets as shown by the second example.
\end{remark}

\begin{remark}
Hegarty \cite{Heg07} proved that there is no MSTD set in $\mZ$
of size 7 and, up to linear transformations, Conway's set is the unique MSTD set in $\mZ$ of size 8. 
We do not know the smallest support of ``MSTD'' random variables taking values in $\mZ$,
although 8 is clearly an upper bound.
\end{remark}

\begin{remark}
We also do not know the smallest $m$ such that there exist ``MSTD'' random variables 
taking values in $\mZ/m\mZ$; however, the third example shows that this $m$ cannot be greater than $12$.
\end{remark}

\subsection{Achievable differences}
\label{sec:additive}

We first briefly introduce the construction of Stein \cite{Ste73} of finite subsets $A_k\subset\mZ$ 
such that the ratio $|A_k-A_k|/|A_k+A_k|$ can be arbitrarily large or small when $k$ is large. 
Using this construction we will give an alternate proof of the result of Lapidoth and Pete \cite{LP08:ieeei}, 
which asserts that $H(X-Y)$ can exceed $H(X+Y)$ by an arbitrarily large amount. 

Let $A, B\subset\mZ$ be two finite subsets. Suppose that the gap between any two consecutive elements of $B$ is sufficiently large. For any $b\in B$, the set $b+A$ represents a relatively small fluctuation around $b$. Large gaps between elements of $B$ will imply that $(b+A)\cap(b'+A)=\emptyset$ for distinct $b, b'\in B$. Then we will have $|A+B|=|A||B|$. For $m\in\mZ$ large, this argument implies that $|A+m\cdot A|=|A|^2$, where $m\cdot A:=\{ma: a\in A\}$. Therefore, the following equations hold simultaneously for sufficiently large $m_0\in\mZ$ 
(which depends on $A, A-A$ and $A+A$):
$$
|A+m_0\cdot A|=|A|^2,
$$
$$
|(A+m_0\cdot A)-(A+m_0\cdot A)| = |(A-A)+m_0\cdot(A-A)|
= |A-A|^2,
$$
and
$$
|(A+m_0\cdot A)+(A+m_0\cdot A)|=|A+A|^2.
$$
Repeating this argument, we can get a sequence of sets $A_k$, defined by
\begin{align}
A_k=A_{k-1}+m_{k-1}A_{k-1}, \label{eq: iteration}
\end{align}
where $A_0=A$, $m_{k-1}\in\mZ$ sufficiently large, with the following properties
\begin{align} \label{eq: A_k property}
|A_k|=|A|^{2k},~~
|A_k\pm A_k|=|A\pm A|^{2k}.
\end{align}

Now we are ready to reprove the result of Lapidoth and Pete \cite{LP08:ieeei}.

\begin{thm}\cite{LP08:ieeei}\label{thm:arb-lp}
For any $M>0$, there exist i.i.d. $\mZ$-valued random variables $X, Y$ with finite entropy such that
$$
H(X-Y)- H(X+Y) > M.
$$
\end{thm}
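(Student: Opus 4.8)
The plan is to leverage the Stein-type construction summarized in equations \eqref{eq: iteration} and \eqref{eq: A_k property} together with the elementary fact that for a random variable uniform on a finite set $S$, its entropy equals $\log|S|$, while for a random variable supported on (but not necessarily uniform on) a set $S$ one still has the crude bound $H \le \log|S|$. Concretely, start from Conway's MSTD set $A=\{0,2,3,4,7,11,12,14\}$ (or any MSTD set), so that $|A-A| < |A+A|$ — wait, we want the \emph{difference} to dominate, so instead start from a set $A$ with $|A-A| > |A+A|$, e.g. the reverse phenomenon, or simply note that Stein's construction allows the ratio $|A_k-A_k|/|A_k+A_k|$ to be made arbitrarily large; take such a base set $A$, say one of Stein's or Marica's examples with $|A-A|>|A+A|$ (the generic ``most sets have larger difference set'' heuristic makes this easy, e.g. $A=\{0,1,2,4,7,8,12\}$ or any asymmetric set typically works). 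Then form $A_k$ via \eqref{eq: iteration} and let $X,Y$ be i.i.d. uniform on $A_k$.

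The key computation is then: $H(X-Y) \ge$ something close to $\log|A_k - A_k| = 2k\log|A-A|$, and $H(X+Y) \le \log|A_k+A_k| = 2k\log|A+A|$. The second inequality is immediate since $X+Y$ is supported on $A_k+A_k$. For the first, the difference $X-Y$ is \emph{not} uniform on $A_k-A_k$ — this is precisely the subtlety flagged in the introduction — so I cannot simply set $H(X-Y)=\log|A_k-A_k|$. However, I do not need the full entropy; a lower bound suffices. One clean route: the self-difference $X-Y$ of a uniform-on-$A$ variable has, for each $d\in A-A$, probability $r_A(d)/|A|^2$ where $r_A(d)=\#\{(a,a'): a-a'=d\}$; since $\sum_d r_A(d)=|A|^2$ and $r_A(d)\le|A|$, one gets $H(X-Y)\ge \log\frac{|A|^2}{\max_d r_A(d)} \ge \log\frac{|A|^2}{|A|} = \log|A| $, which is too weak. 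A better bound: for the iterated construction, $X-Y$ on $A_k-A_k = (A-A)+m_{k-1}(A_{k-1}-A_{k-1})$ decomposes as a sum of \emph{independent} copies (by the large-gap disjointness), so its entropy is \emph{exactly} $k$ times the entropy of a single self-difference $D_A := X_0 - Y_0$ with $X_0,Y_0$ i.i.d. uniform on $A$. Thus $H(X-Y) = k\, H(D_A)$ while $H(X+Y)=k\,H(S_A)$ where $S_A$ is the self-sum. So the whole problem reduces to: find \emph{one} base set $A$ with $H(D_A) > H(S_A)$, and then iterate $k$ times to amplify the gap $k(H(D_A)-H(S_A))$ past any $M$.

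So the real content is producing a single base distribution with $H(\text{difference}) > H(\text{sum})$, which is \emph{easy} — it is the generic case, dual to the MSTD examples already exhibited in Section~\ref{sec:examples}; indeed Croft's original (false) conjecture was that this \emph{always} holds, so typical sets satisfy it. One can just take $A=\{0,1,2\}$: then $S_A$ is supported on $\{0,1,2,3,4\}$ with probabilities $(1,2,3,2,1)/9$ and $D_A$ on $\{-2,-1,0,1,2\}$ with the \emph{same} probabilities — equal entropy, no good. Take instead an asymmetric example like $A=\{0,1,3\}$, compute $H(S_A)$ and $H(D_A)$ directly (a finite check), and verify strict inequality; if that particular one fails, iterate through small sets until one works — their existence is guaranteed by the counting result that MSTD sets are rare (\cite{MO07, HM09}), meaning ``more-difference-than-sum'' in the entropy sense holds for a positive-density family. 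Then conclude: given $M>0$, pick $k > M/(H(D_A)-H(S_A))$, let $X,Y$ be i.i.d. uniform on $A_k$, and $H(X-Y)-H(X+Y) = k(H(D_A)-H(S_A)) > M$, with finite entropy since $A_k$ is finite.

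The main obstacle — and the only genuinely non-routine point — is establishing the \emph{exact} tensorization $H(X-Y)=k\,H(D_A)$, i.e. verifying that for $m$ large enough the self-difference of a uniform variable on $A + mA$ really factors as a product distribution of (self-difference on $A$) $\times$ (self-difference on $mA$). This follows from the large-gap disjointness used to prove \eqref{eq: A_k property}: if the representation $a + m a' $ of elements of $A + mA$ is unique (which holds once $m > \diam(A)$), then $X = U + mV$ and $Y = U' + mV'$ with $U,U',V,V'$ independent, whence $X - Y = (U-U') + m(V-V')$ with the two summands supported on disjoint ``scales'' and hence jointly determining the pair $(U-U', V-V')$ — giving additivity of entropy. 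Once this is in hand the rest is bookkeeping, and the theorem follows.
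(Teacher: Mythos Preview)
Your approach is correct and genuinely different from the paper's. The paper bounds $H(X_k+Y_k)$ from above by $\log|A_k+A_k|$ and bounds $H(X_k-Y_k)$ from below via the crude min-probability estimate $\mP(X_k-Y_k=x)\ge |A_k|^{-2}$; to make those two bounds separate, it must take $A$ to be a near-Sidon set with $|A|=k^2$ and $|A-A|=|A|^2-|A|+1$, and then push through a somewhat delicate asymptotic computation. You instead observe that, for $m_{k-1}$ large enough, the uniform law on $A_k=A_{k-1}+m_{k-1}A_{k-1}$ is the pushforward of the product of two independent uniforms on $A_{k-1}$ under an injective map, so both $H(X_k\pm Y_k)$ factor \emph{exactly} as twice $H(X_{k-1}\pm Y_{k-1})$; hence the gap at level $k$ is $2^k\bigl(H(D_A)-H(S_A)\bigr)$ and one only needs a single base set with $H(D_A)>H(S_A)$. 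This is cleaner and more elementary, and is in fact precisely the mechanism the paper itself uses later to prove Theorem~\ref{thm:arb} (taking i.i.d.\ copies in $\mZ^k$ and then embedding into $\mZ$ by a large-base map); you have simply noticed that the Stein iteration already \emph{is} that embedding. Two small corrections: with the doubling iteration \eqref{eq: iteration} the multiplier is $2^k$, not $k$ (and $A_k-A_k=(A_{k-1}-A_{k-1})+m_{k-1}(A_{k-1}-A_{k-1})$, not $(A-A)+\cdots$); and you should actually exhibit one base example rather than appeal to rarity of MSTD sets --- your candidate $A=\{0,1,3\}$ works, with $H(D_A)=\tfrac{5}{3}\log 3 > 2\log 3 - \tfrac{2}{3}\log 2 = H(S_A)$.
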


\begin{proof}
Recall the following basic property of Shannon entropy
\begin{align}
0\leq H(X)\leq\log|\text{range of}~X|. \label{eq: H upper bound}
\end{align}
We let $X_k, Y_k$ be independent random variables uniformly distributed on the set $A_k$ obtained by the iteration equation \eqref{eq: iteration}. Using the right hand side of \eqref{eq: H upper bound} and the properties given by \eqref{eq: A_k property}, we have
\begin{align}
H(X_k+Y_k) \leq\log|A_k+A_k|=2k\log|A+A|. \label{eq: X_k+Y_k}
\end{align}
Since $X_k, Y_k$ are independent and uniform on $A_k$, for all $x\in A_k-A_k$, we have
$$
\mP(X_k-Y_k=x)\geq|A_k|^{-2}.
$$
Notice the fact that $-t\log t$ is increasing over $(0, 1/e)$. When $k$ is large enough, we have
\begin{eqnarray}
H(X_k-Y_k) &\geq& \frac{|A_k-A_k|}{|A_k|^2}\log |A_k|^2\nonumber\\
&=&4k\log|A|\left(\frac{|A-A|}{|A|^2}\right)^{2k}. \label{eq: X_k-Y_k}
\end{eqnarray}
For any $k\in\mZ^+$, we can always find a set $A\subset\mZ$ with $k^2$ elements such that the set $A-A$ achieves the possible maximal cardinality,
\begin{align}
|A|=k^2, ~|A-A|=|A|^2-|A|+1. \label{eq: choice of A}
\end{align}
Combining \eqref{eq: X_k+Y_k}, \eqref{eq: choice of A} and the trivial bound 
$$
|A+A|\leq\frac{|A|(|A|+1)}{2},
$$
we have that for $k$ large
\begin{eqnarray*}
H(X_k+Y_k) &\leq& 2k\log\frac{|A|(|A|+1)}{2}\nonumber\\
&=& 8k\log k-2k\log2+2k\log(1+k^{-2})\nonumber\\
&=&  8k\log k-2k\log2+o(1).
\end{eqnarray*}
Combining \eqref{eq: X_k-Y_k} and \eqref{eq: choice of A}, we have
\begin{eqnarray*}
H(X_k-Y_k) &\geq&8k\log k\left(1-k^{-2}+k^{-4}\right)^{2k}\nonumber\\
&=& 8k\log k\exp(2k(-k^{-2}+O(k^{-4})))\nonumber\\
&=& 8k\log k(1-2k^{-1}+O(k^{-2}))\nonumber\\
&=& 8k\log k-16\log k+o(1).
\end{eqnarray*}
Therefore we have
$$
H(X_k-Y_k)-H(X_k+Y_k)=2k\log2-16\log k+o(1).
$$
Then the statement follows from that $k$ can be arbitrarily large.
\end{proof}

We observe that the following complementary result is also true.

\begin{thm}\label{thm:arb-mstd}
For any $M>0$, there exist i.i.d. $\mZ$-valued random variables $X, Y$ with finite entropy such that
$$
H(X+Y)-H(X-Y)> M.
$$
\end{thm}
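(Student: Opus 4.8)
The plan is to run the proof of Theorem~\ref{thm:arb-lp} in reverse, feeding in a ``more-sum-than-difference'' base set instead of a difference-dominated one. Concretely, I would start from Conway's MSTD set $A=\{0,2,3,4,7,11,12,14\}$, for which the first example following Theorem~\ref{thm:Z3} already records that, if $Z,Z'$ are i.i.d.\ uniform on $A$, then
\[
\delta:=H(Z+Z')-H(Z-Z')=\tfrac{1}{64}\log\tfrac{282429536481}{215886856192}>0.
\]
The naive hope --- imitate the proof of Theorem~\ref{thm:arb-lp} verbatim with $A_0=A$ --- does not work: there the lower bound on the ``winning'' entropy came from the estimate that every atom of the relevant (difference) set carries mass at least $|A_k|^{-2}$, and this was useful only because $|A-A|\asymp|A|^2$. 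Since $|A+A|\le\binom{|A|+1}{2}<|A|^2$ strictly, the factor $\bigl(|A+A|/|A|^2\bigr)^{2k}$ decays geometrically in $k$, so the analogous lower bound on $H(X_k+Y_k)$ collapses to $0$ and the argument dies.

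The key step, which was not needed in Theorem~\ref{thm:arb-lp}, is to exploit that in the iteration~\eqref{eq: iteration} the multipliers $m_{k-1}$ are huge, so that no ``carrying'' occurs and the sum- and difference-entropies tensorize \emph{exactly}. I would make this transparent via a base-$b$ construction. Fix a sufficiently large integer $b$ (any $b\ge 29$ works, since $A+A\subseteq\{0,\dots,28\}$ and $A-A\subseteq\{-14,\dots,14\}$), fix an integer $n\ge 1$, let $Z_1,\dots,Z_n,Z_1',\dots,Z_n'$ be i.i.d.\ uniform on $A$, and put
\[
X=\sum_{i=1}^{n}b^{i}Z_i,\qquad Y=\sum_{i=1}^{n}b^{i}Z_i'.
\]
Then $X,Y$ are i.i.d., $\mZ$-valued, and supported on a finite set, hence of finite entropy; moreover $X\pm Y=\sum_{i=1}^{n}b^{i}(Z_i\pm Z_i')$. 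Since each ``digit'' $Z_i\pm Z_i'$ lies in $\{-28,\dots,28\}$ while $b\ge 29$, the map $(v_1,\dots,v_n)\mapsto\sum_{i}b^{i}v_i$ is injective on $\{-28,\dots,28\}^{n}$ (inspect the largest index at which two representations disagree and use $\sum_{i<j}b^{i}<b^{j}/(b-1)\le b^{j}/28$).

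Granting that injectivity, $X\pm Y$ and the vector $(Z_1\pm Z_1',\dots,Z_n\pm Z_n')$ determine each other, so independence of the digits yields $H(X\pm Y)=\sum_{i=1}^{n}H(Z_i\pm Z_i')=n\,H(Z\pm Z')$, and hence
\[
H(X+Y)-H(X-Y)=n\bigl(H(Z+Z')-H(Z-Z')\bigr)=n\delta.
\]
Choosing $n>M/\delta$ finishes the proof. (Equivalently one may stay inside~\eqref{eq: iteration}, merely requiring each $m_{k-1}$ to exceed the diameter of $A_{k-1}+A_{k-1}$; the same tensorization then gives $H(X_k\pm Y_k)=2^{k}H(Z\pm Z')$, hence a gap of $2^{k}\delta$.) I expect the only genuinely nontrivial ingredient to be the strict positivity $\delta>0$ --- that is, that Conway's set is MSTD at the level of entropies of convolutions and not merely of cardinalities --- which is precisely the content of the first example; everything afterwards is bookkeeping about non-interacting base-$b$ digits.
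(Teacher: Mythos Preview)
Your approach is correct and is essentially the paper's own route: the paper does not prove Theorem~\ref{thm:arb-mstd} directly but deduces it from the stronger Theorem~\ref{thm:arb}, whose proof is precisely your tensorize-then-embed-in-$\mZ$ argument (take i.i.d.\ copies, use additivity of entropy across independent coordinates, then map $\mZ^k\to\mZ$ via a base-$d$ expansion with $d$ large). Your diagnosis of why the Theorem~\ref{thm:arb-lp} argument fails in this direction also matches the paper's Remark verbatim.

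One small slip to clean up: with $b=29$ the map $(v_1,\dots,v_n)\mapsto\sum_i b^{i}v_i$ is \emph{not} injective on all of $\{-28,\dots,28\}^n$ (for instance $(1,0)$ and $(-28,1)$ both map to $29$, since differences of digits can be as large as $56$). What you actually need, and what your argument actually uses, is injectivity on $(A+A)^n\subseteq\{0,\dots,28\}^n$ and on $(A-A)^n\subseteq\{-14,\dots,14\}^n$ separately; for each of these the digit differences are bounded by $28$ in absolute value, and then your bound $28\sum_{i<j}b^i<b^j$ does go through for $b\ge 29$. Alternatively, just take $b\ge 57$ and your unified statement becomes true as written.
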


\begin{remark}
The previous argument can not be used to prove this result. If we proceed the same argument, we will see that the lower bound of $H(X_k+Y_k)$ similar to \eqref{eq: X_k-Y_k} will be really bad. The reason is that
$$
\left(\frac{|A+A|}{|A|^2}\right)^{2k}\rightarrow 0
$$ 
exponentially fast. Both Theorems~\ref{thm:arb-lp} and \ref{thm:arb-mstd} can be proved using a probabilistic 
construction of Ruzsa~\cite{Ruz92:1} on the existence of large
additive sets $A$ with $|A-A|$ very close to the maximal value $|A|^2$, but $|A+A|\leq n^{2-c}$ for some explicit absolute constant $c>0$; and similarly with the roles of $A-A$ and $A+A$ reversed.
\end{remark}

In fact, we have the following stronger result.

\begin{thm}\label{thm:arb}
For any $M\in \mR$, there exist i.i.d. $\mZ$-valued random variables $X, Y$ with finite entropy such that 
$$
H(X+Y)-H(X-Y)=M.
$$
\end{thm}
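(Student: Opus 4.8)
The plan is to deduce Theorem~\ref{thm:arb} from Theorems~\ref{thm:arb-lp} and~\ref{thm:arb-mstd} by an intermediate value argument, exploiting the fact that entropy, convolution, and the negation map all depend continuously on a distribution that is supported on a \emph{fixed finite} set. First dispose of the trivial case $M=0$: if $X$ is symmetric about the origin, then $Y$ and $-Y$ have the same law, so $X+Y$ and $X-Y$ have the same law and $H(X+Y)-H(X-Y)=0$. So assume $M>0$; the case $M<0$ is entirely symmetric, using Theorem~\ref{thm:arb-lp} in place of Theorem~\ref{thm:arb-mstd} in what follows.

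Next I would fix, using Theorem~\ref{thm:arb-mstd} (and noting that the constructions behind it --- iterated MSTD sets, or Ruzsa's probabilistic sets, per the preceding remark --- produce \emph{finitely supported uniform} distributions), a finitely supported distribution $p_1$ on $\mZ$ such that i.i.d.\ draws $X_1,Y_1\sim p_1$ satisfy $H(X_1+Y_1)-H(X_1-Y_1)>M$. Let $p_0$ be any symmetric finitely supported distribution on $\mZ$, e.g.\ the uniform distribution on $\{-1,0,1\}$, so that i.i.d.\ draws $X_0,Y_0\sim p_0$ give $H(X_0+Y_0)-H(X_0-Y_0)=0<M$. Put $S=\supp(p_0)\cup\supp(p_1)$, a finite subset of $\mZ$, and for $t\in[0,1]$ let $p_t=(1-t)p_0+tp_1$, a probability distribution on $S$; let $X_t,Y_t$ be i.i.d.\ with common law $p_t$.

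Then I would define $\phi(t)=H(X_t+Y_t)-H(X_t-Y_t)$ and check that $\phi$ is continuous on $[0,1]$. The law of $X_t+Y_t$ is the convolution $p_t\star p_t$, supported in the finite set $S+S$, and the law of $X_t-Y_t$ is $p_t\star\widetilde{p_t}$ with $\widetilde{p_t}(x)=p_t(-x)$, supported in $S-S$; both are vector-valued polynomial functions of $t$, hence continuous, and Shannon entropy is continuous on a finite-dimensional probability simplex (using the continuous extension of $-x\log x$ by $0$ at $x=0$). Composing, $\phi$ is continuous, $\phi(0)=0<M<\phi(1)$, and the Intermediate Value Theorem yields $t^\ast\in(0,1)$ with $\phi(t^\ast)=M$. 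Taking $X=X_{t^\ast}$, $Y=Y_{t^\ast}$ --- i.i.d., $\mZ$-valued, supported on the finite set $S$ and hence of finite entropy --- gives $H(X+Y)-H(X-Y)=M$.

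I do not expect a real obstacle here: the substantive work lives in Theorems~\ref{thm:arb-lp} and~\ref{thm:arb-mstd}, and the only points requiring care are (i) ensuring the distributions supplied by those theorems can be taken with \emph{finite} support, so that continuity of entropy under convolution is unproblematic (which the underlying set constructions guarantee), and (ii) the routine verification that a convex combination on a common finite alphabet moves $\phi$ continuously between its two endpoint values. If one preferred to avoid the case split, one could run the same interpolation over $t\in[-1,1]$ with a symmetric distribution at $t=0$, a Theorem~\ref{thm:arb-mstd} distribution at $t=1$, and a Theorem~\ref{thm:arb-lp} distribution at $t=-1$, and apply the Intermediate Value Theorem on whichever half-interval brackets $M$.
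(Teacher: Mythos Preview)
Your argument is correct: once Theorems~\ref{thm:arb-lp} and~\ref{thm:arb-mstd} are in hand (with finitely supported constructions, which the Stein/Ruzsa sets do provide), a convex interpolation on a common finite alphabet gives a continuous path for $\phi(t)=H(X_t+Y_t)-H(X_t-Y_t)$, and the Intermediate Value Theorem delivers any prescribed $M$. The only care points you flagged---finite support of the endpoint distributions, and continuity of entropy of convolutions on a fixed finite alphabet---are handled exactly as you say.

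The paper's proof takes a genuinely different route. It does \emph{not} invoke Theorems~\ref{thm:arb-lp} or~\ref{thm:arb-mstd} at all. Instead it uses only the elementary examples of Section~\ref{sec:examples} (a binary distribution for the negative sign, the uniform on Conway's set for the positive sign) to conclude that the range of $H(X+Y)-H(X-Y)$ over distributions on $\{0,\dots,n-1\}$ contains some small interval $(a,b)\ni 0$. It then amplifies this interval by taking $k$ independent copies, so that $H(X'+Y')-H(X'-Y')=k\bigl[H(X+Y)-H(X-Y)\bigr]$ ranges over $(ka,kb)$, and finally embeds the $\mZ^k$-valued $X'$ injectively into $\mZ$ via a base-$d$ map to preserve entropies. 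Thus the paper trades the heavy input (Theorems~\ref{thm:arb-lp}/\ref{thm:arb-mstd}) for a tensor-and-embed step; in particular its proof of Theorem~\ref{thm:arb} is logically independent of those two theorems and in fact re-derives them as immediate corollaries. Your approach is cleaner once those theorems are already on the table, and avoids the $\mZ^k\to\mZ$ embedding; the paper's approach is more self-contained, needing only a single explicit example of each sign.
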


\begin{proof}
Let $X$ be a random variable taking values in $\{0, 1, \cdots, n-1\}\subset\mZ$. Then $H(X+Y)-H(X-Y)$ is a continuous function of 
the probability mass function of $X$, which consists of $n$ real variables. We can assume that $n$ is large enough if necessary. From the discussion in Section \ref{sec:examples}, we know that this function can take both positive and negative values. (For instance Theorem \ref{thm:Z3} implies that a binary distribution can give us negative difference, and the uniform distribution on Conway's MSTD set will yield positive difference). Since the function is continuous, the intermediate value theorem implies that its range must contain an open interval $(a, b)$ with $a<0<b$. Let $X_1, \cdots X_k$ be $k$ independent copies of $X$ and we define $X'=(X_1, \cdots, X_k)$. Let $Y'$ be an independent copy of $X'$. Then we have
$$
H(X'+Y')-H(X'-Y')=k [H(X+Y)-H(X-Y)] .
$$
The range of $H(X'+Y')-H(X'-Y')$ will contain $(ka, kb)$. This difference can take any real number since $k$ can be arbitrarily large. The random variables $X', Y'$ take finite values of $\mZ^k$. Using the linear transformation $(x_1, \cdots, x_k)\to x_1+dx_2+\cdots+d^{k-1}x_k$, we can map $X, Y$ to $\mZ$-valued random variables. This map preserves entropy as $d$ is large enough. So these $\mZ$-valued random variables will have the desired property. 
\end{proof}

Recall that, for a real-valued random variable $X$ with the density function $f(x)$, the differential entropy $h(X)$ is defined by
\begin{align}
h(X)=-\mE\log f(X) = -\int_{\mR} f(x)\log f(x)dx .
\end{align}

\begin{thm}
For any $M\in \mR$, there exist i.i.d. real-valued random variables $X, Y$ with finite differential entropy such that
\begin{align}
h(X+Y)-h(X-Y)=M.
\end{align}
\end{thm}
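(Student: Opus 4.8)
The plan is to transfer the discrete result of Theorem~\ref{thm:arb} to the continuous setting by a smoothing argument, using the fact that differential entropy behaves well under the operation of ``spreading out'' an integer-valued random variable into a mixture of narrow bumps. First I would take $\mZ$-valued i.i.d. random variables $X_0, Y_0$ with finite entropy realizing $H(X_0+Y_0)-H(X_0-Y_0)=M$, which exist by Theorem~\ref{thm:arb}. Then for a small parameter $\delta\in(0,1/2)$ I would define real-valued random variables $X=X_0+\delta U$ and $Y=Y_0+\delta V$, where $U,V$ are i.i.d. uniform on $[0,1)$ (or on $(-1/2,1/2)$), independent of $X_0,Y_0$ and of each other, so that $X,Y$ are i.i.d. real-valued with a density; note $X$ has finite differential entropy since $X_0$ has finite entropy (the density is a locally finite mixture of indicator functions of intervals of length $\delta$, each with the corresponding atom as weight).

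The key computation is that $h(X+Y)=H(X_0+Y_0)+c(\delta)$ and $h(X-Y)=H(X_0-Y_0)+c(\delta)$ for the \emph{same} correction term $c(\delta)$, so that $h(X+Y)-h(X-Y)=M$ exactly. To see this, observe that $X+Y = (X_0+Y_0) + \delta(U+V)$, and $U+V$ is supported on $[0,2)$ with the triangular density; more importantly, the distribution of $X+Y$ decomposes along the integer lattice: conditionally on the integer part, the fractional part has a distribution that does not depend on which integer we are near, because $X_0+Y_0$ is integer-valued and $U+V$ is the same everywhere. Concretely, writing $S=X_0+Y_0$ and $T=\delta(U+V)$ with $S$ integer-valued and independent of $T$, the density of $S+T$ at $s+\tau$ (for $s\in\mZ$, $\tau$ in a fundamental domain of length comparable to $\delta$) factors, and the chain rule for differential entropy gives $h(S+T)=H(S)+h(T)-\log$(something depending only on $\delta$ and the fixed law of $U+V$), i.e. a quantity $c(\delta)$ independent of the law of $S$. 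Applying the identical reasoning to $X-Y=(X_0-Y_0)+\delta(U-V)$, and using that $U-V$ and $U+V$ are related by a measure-preserving reflection so the correction is again the same $c(\delta)$ (or, if one prefers, just noting that both corrections are finite constants and taking the difference to cancel them once one checks they agree), yields the claimed identity.

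The main obstacle is making the ``correction term cancels'' step fully rigorous: one must verify that $h(S+T)-H(S)$ genuinely depends only on the law of $T$ and not on that of the integer-valued $S$, which requires a careful use of the decomposition of differential entropy via the discrete-continuous chain rule, $h(S+T)=H(\lfloor S+T\rfloor) + h(\{S+T\} \mid \lfloor S+T\rfloor)$-type identities, together with the observation that translating $S$ by an integer does not change the conditional law of the fractional part. A clean way to sidestep subtleties: choose $\delta$ small enough that the supports $s+\delta\cdot\supp(U\pm V)$ for distinct $s\in\mZ$ are disjoint (true as soon as $2\delta<1$); then the density of $S+T$ is literally a disjoint superposition of scaled copies of the density of $T$ with weights $\mP(S=s)$, and a one-line computation gives $h(S+T)=H(S)+h(T)$ exactly, with $h(T)=\log\delta + h(U+V)$ finite. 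Since $h(U+V)$ and $h(U-V)$ are both finite constants (and in fact equal, $U-V$ being a reflection of a shift of $U+V$), we get $h(X+Y)-h(X-Y)=H(X_0+Y_0)-H(X_0-Y_0)=M$, and finiteness of $h(X),h(Y)$ follows the same way. This disjoint-support version avoids the conditional-entropy bookkeeping entirely and is the route I would actually write up.
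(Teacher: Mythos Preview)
Your proposal is correct and is essentially the same as the paper's proof: the paper also invokes Theorem~\ref{thm:arb} to get integer-valued $X',Y'$, sets $X=X'+U$, $Y=Y'+V$ with $U,V$ i.i.d.\ uniform on $(-1/4,1/4)$ (your $\delta$ with the symmetric interval), and uses the disjoint-support decomposition $h(X\pm Y)=H(X'\pm Y')+h(U\pm V)$ together with $h(U+V)=h(U-V)$ by symmetry. Your write-up is more detailed about why the disjoint-support step works, but the route and the key idea coincide.
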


\begin{proof}
From Theorem~\ref{thm:arb}, we know that there exist $\mZ$-valued random variables $X', Y'$ with the desired property. Let $U, V$ be independent random variables uniformly distributed on $(-1/4, 1/4)$, which are also independent of $(X', Y')$. Then we define
$X=X'+U$ and $Y=Y'+V$.
Elementary calculations will show that
$$
h(X+Y)=H(X'+Y')+h(U+V),
$$
and
$$
h(X-Y)=H(X'-Y')+h(U-V).
$$
Since $U, V$ are symmetric, $U+V$ and $U-V$ have the same distribution. Therefore, we have
$$
h(X+Y)-h(X-Y)=H(X'+Y')-H(X'-Y').
$$
Then the theorem follows.
\end{proof}


\begin{remark}
In the set cardinality setting, Nathanson \cite{Nat07:1} raised the question: what are the possible values of $|A+A|-|A-A|$ for finite subsets $A\subset \mZ$? Martin and O'Bryant \cite{MO07} proved that for any $k\in \mZ$ there exists $A$ such that $|A+A|-|A-A|=k$; 
this was also independently obtained by Hegarty \cite{Heg07}.
\end{remark}

\subsection{Entropy analogue of Freiman-Pigarev inequality}
\label{sec:multiplicative}

We proved that the entropies of the sum and difference of two i.i.d. random variables can differ by an arbitrary amount additively. 
However we will show that they do not differ too much multiplicatively.


In additive combinatorics, for a finite additive set $A$, the doubling constant $\sigma[A]$ is defined as
\begin{align}
\sigma[A]=\frac{|A+A|}{|A|}.
\end{align}
Similarly the difference constant $\delta[A]$ is defined by
\begin{align}
\delta[A]=\frac{|A-A|}{|A|}.
\end{align}
It was first observed by Ruzsa \cite{Ruz78} that
\begin{align}
\delta[A]^{1/2}\leq\sigma[A]\leq\delta[A]^3. \label{eq: doubling-difference}
\end{align}
The upper bound can be improved down to $\delta[A]^2$ using Pl\"{u}nnecke inequalities. Thus a finite additive set has small doubling constant if and only if its difference constant is also small. In the entropy setting, we have
\begin{align}
\frac{1}{2}\leq\frac{H(X+Y)-H(X)}{H(X-Y)-H(X)}\leq 2 \label{eq: H doubling-difference}
\end{align}
for i.i.d. random variables $X, Y$.
The upper bound was proved by Madiman \cite{Mad08:itw} and the lower bound was proved independently by Ruzsa \cite{Ruz09} and Tao \cite{Tao10}. The inequalities also hold for differential entropy \cite{MK10:isit, KM14}, and in fact for entropy with respect to the
Haar measure on any locally compact abelian group \cite{MK15}. In other words, after subtraction of $H(X)$, the entropies of the sum and the difference of two i.i.d. random variables are not too different. We observe that the entropy version \eqref{eq: H doubling-difference}
of the doubling-difference inequality implies the entropy analogue of the following result proved by Freiman and Pigarev \cite{FP73}:
\begin{align}
|A-A|^{3/4}\leq |A+A|\leq|A-A|^{4/3}.
\end{align}

\begin{thm}
Let $X, Y$ be i.i.d. discrete random variables with finite entropy, then we have
\begin{align}
\frac{3}{4}<\frac{H(X+Y)}{H(X-Y)}<\frac{4}{3}.
\end{align}
\end{thm}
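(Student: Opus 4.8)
The plan is to reduce everything to two inputs already available: the entropy doubling--difference inequality \eqref{eq: H doubling-difference}, and the \emph{strict} form of the trivial upper bound in \eqref{eq: H upper-lower bound} for i.i.d.\ variables. First I would dispatch the trivial case: if $X$ is degenerate then $H(X\pm Y)=0$ and the ratio is undefined, so assume throughout that $X$, hence $Y$, is non-degenerate, and put $t:=H(X)=H(Y)\in(0,\infty)$. Non-degeneracy already guarantees $H(X-Y)>0$ (if $X-Y$ were almost surely constant, independence would force $Y$, and therefore $X$, to be almost surely constant), so the ratio is well defined.

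The key step is to observe that for i.i.d.\ non-degenerate $X,Y$ one has the \emph{strict} inequalities $H(X+Y)<2H(X)$ and $H(X-Y)<2H(X)$. The reason is that $H(X\pm Y)\le H(X,Y)=2t$, with equality only when $(x,y)\mapsto x\pm y$ is injective on $\supp(X)\times\supp(X)$; but choosing distinct $s,s'\in\supp(X)$, the pairs $(s,s')$ and $(s',s)$ share a common sum, while $(s,s)$ and $(s',s')$ share the common difference $0$, so injectivity fails in both cases. Hence, writing $a:=H(X+Y)-t$ and $b:=H(X-Y)-t$, the lower bounds in \eqref{eq: H upper-lower bound} give $a,b\ge 0$ while the strictness just proved gives $a<t$ and $b<t$.

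It then remains to feed this into \eqref{eq: H doubling-difference}, used in its division-free form $a\le 2b$ and $b\le 2a$ (the first is Madiman's upper bound, the second the lower bound of Ruzsa and Tao). From $a\le 2b$, hence $4b\ge 2a$, one gets $4H(X-Y)-3H(X+Y)=t+4b-3a\ge t+2a-3a=t-a>0$, i.e.\ $H(X+Y)/H(X-Y)<4/3$; symmetrically, from $b\le 2a$, hence $4a\ge 2b$, one gets $4H(X+Y)-3H(X-Y)=t+4a-3b\ge t+2b-3b=t-b>0$, i.e.\ $H(X+Y)/H(X-Y)>3/4$. The only genuine obstacle is the strictness claim $H(X\pm Y)<2H(X)$: with the merely non-strict bound $H(X\pm Y)\le 2H(X)$ this argument yields only the closed interval $[3/4,4/3]$, so it is precisely the equality analysis of $H(f(Z))\le H(Z)$, combined with non-degeneracy of $X$, that upgrades the conclusion to the open interval claimed.
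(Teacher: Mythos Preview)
Your proof is correct and follows essentially the same route as the paper: both arguments combine the doubling--difference inequality \eqref{eq: H doubling-difference} (in its division-free form $a\le 2b$, $b\le 2a$) with the strict inequality $H(X\pm Y)<2H(X)$ for non-degenerate i.i.d.\ $X,Y$, and then carry out an equivalent bit of algebra. Your treatment is slightly more explicit in justifying the strictness step via non-injectivity of $(x,y)\mapsto x\pm y$ on the support, whereas the paper simply notes that equality in $H(X+Y)\le 2H(X)$ forces $X$ to be degenerate; otherwise the arguments coincide.
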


\begin{proof}
The basic fact of Shannon entropy \eqref{eq: H upper-lower bound} implies that $H(X+Y)=0$ if and only if $H(X-Y)=0$. In this case, the above theorem is true if we define $0/0=1$. So we assume that neither $H(X+Y)$ nor $H(X-Y)$ is 0. For the upper bound, we have
\begin{eqnarray*}
\frac{H(X+Y)}{H(X-Y)} &=& \frac{H(X+Y)}{H(X-Y)-H(X)+H(X)}\\
&\leq& \frac{H(X+Y)}{(H(X+Y)-H(X))/2+H(X)}\\
&=&\frac{2H(X+Y)}{H(X+Y)+H(X)}\\
&<&\frac{4}{3}
\end{eqnarray*}
The second step follows from the upper bound in \eqref{eq: H doubling-difference} and the fact that Shannon entropy is non-negative. The last step uses the right hand side of \eqref{eq: H upper-lower bound} and the fact that, in the i.i.d. case, $``="$ of the upper bound happens only when $X$ takes on a single value, i.e. $H(X)=0$. 
The lower bound can be proved in a similar way.
\end{proof}

 \begin{remark}
It is unknown if the inequality \eqref{eq: H doubling-difference} is best possible. Suppose that, for some $\alpha\in(1, 2)$, we have
\ben
\alpha^{-1}\leq\frac{H(X+Y)-H(X)}{H(X-Y)-H(X)}\leq \alpha.
\een
Using the same argument, the above theorem can be improved to
\ben
\frac{\alpha+1}{2\alpha}<\frac{H(X+Y)}{H(X-Y)}<\frac{2\alpha}{\alpha+1}.
\een
\end{remark}

\begin{remark}

The above theorem does not hold for continuous random variables. For example, let $X$ be an exponential random variable with parameter $\lambda$, and $Y$ be an independent copy of $X$. Then $X+Y$ satisfies the Gamma distribution $\Gamma(2, \lambda^{-1})$ with the differential entropy 
$$
h(X+Y)=1+\gamma-\log\lambda\approx1.577-\log\lambda,
$$
where $\gamma$ is the Euler's constant. On the other hand, $X-Y$ has the Laplace distribution $\text{Laplace}(0, \lambda^{-1})$ with the differential entropy
$$
h(X-Y)=1+\log2-\log\lambda\approx1.693-\log\lambda.
$$
We can see that
$$
\lim_{\lambda\rightarrow (2e)^+}\frac{h(X+Y)}{h(X-Y)}=\infty,
$$
and
$$
\lim_{\lambda\rightarrow (2e)^-}\frac{h(X+Y)}{h(X-Y)}=-\infty.
$$
\end{remark}


\section{Weighted sums and polar codes} \label{sec: polar}

\subsection{Polar codes: introduction}

Polar codes, invented by Ar{\i}kan \cite{Ari09} in 2009, achieve the capacity of arbitrary binary-input symmetric discrete memoryless channels. 
Moreover, they have low encoding and decoding complexity and an explicit construction. Consequently they have attracted a great
deal of attention in recent years. In order to discuss polar codes more precisely, we now recall some standard terminology
from information and coding theory.

As is standard practice in information theory, 
we use $U^k$ to denote $(U_1, \ldots, U_k)$,
and $I(X;Y|Z)$ to denote the conditional mutual information between $X$ and $Y$ given $Z$,
which is defined by
\ben
I(X;Y|Z)= H(X,Z)+H(Y,Z)-H(X,Y,Z)- H(Z) .
\een
It is well known, and also trivial to see, that the conditional entropy $H(X|Y)$, defined as the mean using the distribution of $Y$
of $H(X|Y=y)$, satisfies the ``chain rule'' $H(Y)+H(X|Y)=H(X,Y)$, so that $I(X;Y|Z)=H(X|Z)-H(X|Y,Z)$.
The mutual information between $X$ and $Y$, namely $I(X;Y)=H(X)-H(X|Y)$,
emerges in the case where there is no conditioning.
In particular, $I(X;Y|Z)=0$ if and only if $X$ and $Y$ are conditionally independent given $Z$.
Furthermore, one also has the chain rule for mutual information, which states that
$I(X; Y,Z)= I(X;Z) + I(X;Y|Z)$.

A major goal in coding theory is to obtain efficient codes that achieve the Shannon capacity on a discrete memoryless channel. A memoryless channel is defined first by a ``one-shot'' channel $W$, which is a stochastic kernel from an input alphabet $\X$ to an output alphabet $\Y$ (i.e., for each $x \in \X$, $W(\cdot |x)$ is a probability distribution on $\Y$), and the memoryless extension of $W$ for length $n$ vectors is defined by 
\begin{align}
W^{(n)}(y^n|x^n) = \prod_{i=1}^n W(y_i|x_i), \quad x^n \in \X^n, y^n \in \Y^n. 
\end{align}
To simplify the notation, one often makes a slight abuse of notation, writing $W^{(n)}$ as $W$.

A linear code of block length $n$ on an alphabet $\X=\mathbb{F}$ (which must be a field) is a subspace of $\mathbb{F}^n$. The vectors in the subspace are often called the codewords. A linear code is equivalently defined by a generator matrix, i.e., a matrix with entries in the field whose rows form a basis for the code. 
If the dimension of the code is $k$, and if $G$ is a $k \times n$ generator matrix for the linear code, the codewords are given by the span of the rows of $G$, i.e., all multiplications $uG$ where $u$ is a $1 \times k$ vector over the field. We refer to \cite{CT06:book,Rot06:book} for a more detailed introduction to information and coding theory.

In polar codes, the generator matrix of block length $n$ 
is obtained by deleting\footnote{If the channel is symmetric the generator matrix is indeed obtained by deleting rows, 
otherwise in addition to deleting rows one may also have to translate the codewords (affine code).} 
some rows of the matrix 
$G_n= \bigl[\begin{smallmatrix} 
      1 & 0 \\
      1 & 1 \\
   \end{smallmatrix}\bigr]^{\otimes \log_2 n}$. 
Which rows to delete depends on the channel and the targeted error probability (or rate). 
For a symmetric discrete memoryless channel $W$, the rows to be deleted are indexed by 
 \begin{align}
 \mathcal{B}_{\e,n}:=\{i \in [n] : I(U_i;Y^n U^{i-1}) \leq 1- \e\},
 \end{align} 
where $\e$ is a parameter governing the error probability, 
the vector $U^n$ has i.i.d.\ components which are uniform on the input alphabet, 
$X^n=U^n G_n$, and $Y^n$ is the output of $n$ independent uses of $W$ when $X^n$ is the input. 

To see the purpose of the transform $G_n$, consider the case $n=2$ first. 
Applying $G_2$ to the vector $(U_1,U_2)$ yields  
\begin{align} \nonumber
X_1&=U_1+U_2,\\ 
X_2&=U_2. \nonumber
\end{align}
Transmitting $X_1$ and $X_2$ on two independent uses of a binary input channel $W$ leads to two output variables $Y_1$ and $Y_2$; recall that this means that $Y_1$ (or $Y_2$) is a random variable whose distribution is given by $W(\cdot | x)$ where $x$ is the realization of $X_1$ (or $X_2$).  If we look at the mutual information between the vectors $X^2=(X_1,X_2)$ and $Y^2=(Y_1,Y_2)$, since the pair of components $(X_1,Y_1)$ and $(X_2,Y_2)$ are mutually independent, the chain rule yields 
\begin{align}
I(X^2;Y^2)=I(X_1;Y_1) + I(X_2;Y_2) = 2 I(W), \label{eq1}
\end{align}
where $I(W)$ is defined as the mutual information of the one-shot channel $W$ with a uniformly distributed  input. Further, since the transformation $G_2$ is one-to-one, and since the mutual information is clearly invariant under one-to-one transformations of its arguments (the mutual information depends only on the joint distribution of its arguments), we have that 
\begin{align}
I(U^2;Y^2)=I(X^2;Y^2). \label{eq2}
\end{align}
If we now apply the chain rule to the left hand side of previous equality, the dependencies in the components of $U^2$ obtained by mixing $X^2$ with $G_2$ lead this time to two different terms, namely,  
\begin{align}
I(U^2;Y^2)=I(U_1;Y^2) + I(U_2;Y^2,U_1). \label{eq3}
\end{align}
Putting back \eqref{eq1}, \eqref{eq2}, and \eqref{eq3} together, we have that
\begin{align}
I(W)= \frac{1}{2} \left( I(U_1;Y^2) + I(U_2;Y^2,U_1) \right). \label{eq4}
\end{align}
Now, the above is interesting because the two terms in the right-hand side are precisely not equal. 
In fact,  $I(U_2;Y^2,U_1)$ must be greater than its counter-part without the mixing of $G_2$, i.e.,  $I(U_2;Y^2,U_1) \geq I(X_2;Y_2)=I(W)$. To see this, note that 
\begin{align*}
I(U_2;Y^2,U_1) &= H(U_2) - H(U_2|Y^2,U_1) \\
&\geq H(U_2) - H(U_2|Y^2)\\
&= H(X_2) - H(X_2|Y_2)\\
&= I(X_2;Y_2)
\end{align*}
where the inequality above uses the fact that conditioning can only reduce entropy, hence dropping the variable $U_1$ in $H(U_2|Y^2,U_1)$ can only increase the entropy.
Further, one can check that besides for degenerated cases where $W$ is deterministic or fully noisy (i.e., making input and output independent), $I(U_2;Y^2,U_1)$ is strictly larger than $I(X_2;Y_2)$.
Thus, the two terms in the right-hand side of \eqref{eq4} are respectively lesser and greater that $I(W)$, but they average out to the original amount $I(W)$.

In summary, out of two independent copies of the channel $W$, the transform $G_2$ allows us to create two new synthetic channels 
\begin{align*}
W^-&: U_1 \to  Y_1, Y_2 \\
W^+&: U_2 \to  Y_1, Y_2, U_1 
\end{align*}
that have respectively a worse and better mutual information
\begin{align*}
&I(W^-) \leq I(W) \leq I(W^+).
\end{align*}
while overall preserving the total amount of mutual information
\begin{align*}
&I(W)=\frac{1}{2} (I(W^+) + I(W^-)).
\end{align*}
The key use of the above phenomena, is that if one wants to transmit only one bit (uniformly drawn), using $W^+$ rather than $W$ leads to a lower error probability since the channel $W^+$ carries more information. One can then iterate this argument several times and hope obtaining a subset of channels of very high mutual information, on which bits can be reliably transmitted. After $\log_2n$ iterations, one obtains the synthesized channels $U_i \mapsto (Y^n, U^{i-1})$. 
Thus, for a given number of information bits to be transmitted (i.e., for a given rate), one can select the channels with the largest mutual informations to minimize the error probability. As explained in the next section, the phenomenon of {\it polarization} happens in the sense that as $n$ tends to infinity, the synthesized channels have mutual information tending to either 0 or 1 (besides for a vanishing fraction of exceptions). Hence, sending information bits through the high mutual information channels (equivalently, deleting rows of $G_n$ corresponding to low mutual information channels) allows one to achieve communication rates as large as the mutual information of the original binary input channel.  The construction extends to $q$-ary input alphabets when $q$ is prime\footnote{If $q$ is a power of a prime, 
and one uses modulo $q$ operations, the polarization still occurs but to multiple levels, as shown independently
by Park and Barg \cite{PB13} and Sahebi and Pradhan \cite{SP13}.}, using the same matrix $G_n= \bigl[\begin{smallmatrix} 
      1 & 0 \\
      1 & 1 \\
   \end{smallmatrix}\bigr]^{\otimes \log_2 n}$, while carrying the operations over $\mF_q$.

It is tempting to investigate what happens if one keeps the Kronecker structure of the generator matrix but modifies the kernel 
$\bigl[\begin{smallmatrix} 
      1 & 0 \\
      1 & 1 \\
   \end{smallmatrix}\bigr]$.
For binary input alphabets, there is no other interesting choice (up to equivalent permutations).  
 In Mori and Tanaka \cite{MT10:itw}, the error probability of non-binary polar codes constructed on the basis of Reed-Solomon matrices is calculated using numerical simulations on $q$-ary erasure channels. It is confirmed that 4-ary polar codes can have significantly better performance than binary polar codes. Our goal here is to investigate potential improvements at {\it finite block length} using modified kernels over $\mF_q$. We propose to pick kernels not by optimizing the polar code exponent as in \cite{MT10:itw} but by maximizing the polar martingale spread. This connects to the object of study in this paper, as explained next. The resulting improvements are illustrated with numerical simulations. 
 
\subsection{Polar martingale}
In order to see that polarization happens, namely that
\begin{align}
\frac{1}{n} |\{i \in [n] : I(U_i; Y^n ,U^{i-1}) \in (\e,1-\e) \}| \to 0,
\end{align}
it is helpful to rely on a random process having a uniform measure on the possible realizations of $I(U_i; Y^n U^{i-1})$. Then, counting the number of such mutual informations in $(\e,1-\e)$ can be obtained by evaluating the probability that the process lies in this interval. 
The process is defined by taking $\{B_n\}_{n \geq 1}$ to be i.i.d. random variables uniform  on $\{-,+\}$ and the binary (or $q$-ary with $q$ prime) random input channels $\{W_n, \ n \geq 0\}$ are defined by 
\begin{align}
& W_0 := W, \notag \\
& W_n := W_{n-1}^{B_n}, \quad \forall n \geq 1. \label{ind} 
\end{align}
Then the polarization result can be expressed as
\begin{align}
\mP\{ I(W_n) \in (\e,1-\e) \}  \to 0.
\end{align}
The process $I(W_n)$ is particularly handy as it is a bounded martingale with respect to the filtration $B_n$. This is a consequence of the balance equation derived in \eqref{eq4}. Therefore, $I(W_n)$ converges almost surely, which means that almost surely, for any $\e>0$ and $n$ large enough, $|I(W_{n+1})- I(W_n)| = I(W_n^+) - I(W_n)<\e$. Since for $q$-ary input channels ($q$ prime), the only channels for which $I(W^+)-I(W)$ is arbitrarily small is when $I(W)$ is arbitrarily close to 0 or 1, the conclusion of polarization follows. The key point is that the martingale $I(W_n)$ is a random walk in $[0,1]$ and it is `unstable at any point $I(W) \in (0,1)$ as it must move at least $I(W^+)-I(W)>0$ in this range. The plot of $I(W^+)-I(W)>0$ for different values of $I(W)$ is provided in Figure \ref{gap_ari}. 

Thus, the larger the spread $I(W^+)-I(W)$, the more unstable the martingale is at non-extremal points and the faster it should converge to the extremes (i.e., polarized channels).  To see why this is connected to the object of study of this paper, we need one more aspect about polar codes. 

\begin{figure}
\begin{center}
\includegraphics[scale=1.7]{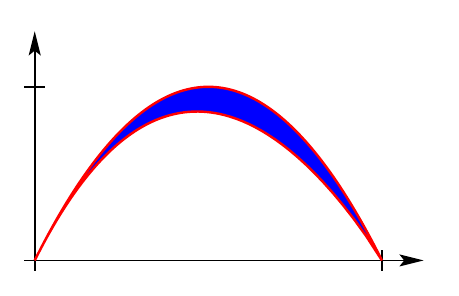}
\caption{Plot of $I(W)$ (horizontal axis) vs.\ $I(W^+)- I(W)$ for all possible binary input channels (the tick on the horizontal axis is at 1 and the tick on vertical axis is at $1/4$).}
\label{gap_ari}
\end{center}
\end{figure}

When considering channels that are `additive noise', polarization can be understood in terms of the noise process rather than the actual channels $W_n$. Consider for example the the binary symmetric channel. When transmitting a codeword $c^n$ on this channel, the output is $Y^n=c^n+ Z^n$, where $Z^n$ has i.i.d.\ Bernoulli components. The polar transform can then be carried over the noise $Z^n$. Since
\begin{align}
&I(U_i; Y^n U^{i-1})  = 1-H((G_n Z^n)_i |  (G_n Z^n)^{i-1}), \label{drop}
\end{align}
the mutual information of the polarized channels is directly obtained from the conditional entropies of the polarized noise vector $G_nZ^n$. The counterpart of this polarization phenomenon is called source polarization \cite{Ari10:isit}. It is extended in \cite{Abb11:ita} to multiple correlated sources.  
For $n=2$, the spread of the two conditional entropies is exactly given by $H(Z+Z') - H(Z)$, where $Z,Z'$ are i.i.d.\ under the noise distribution. 
In Ar{\i}kan and Telatar \cite{AT09:isit}, the rate of convergence of the polar martingale is studied as a function of the block length. Our goal here is to investigate the performance at finite block length, motivated by maximizing the spread at block length $n=1$. When considering non-binary polar codes, that spread is governed by the entropy of a linear combination of i.i.d.\ variables. Preliminary results on this approach were presented in \cite{Abb12:izs}, while the error exponent and scaling law of polar codes have been studied in particular in \cite{Has13:phd} and references therein.

\subsection{Kernels with maximal spread} 

Being interested in the performance of polar codes at finite block length, we start with the optimization of the kernel matrix over $\mF_q$ of block length $n=2$. Namely, we investigate the following optimization problem:
\begin{align}
K^*(W) = \arg\max_{ K \in M_2(\mF_q) } I(W^+(W,K)),  
\end{align}
where $W^+(W,K)$ is the channel $u_2 \mapsto Y_1 Y_2 u_1$, and $(Y_1,Y_2)$ are the output of two independent uses of $W$ when $(x_1,x_2)=(u_1,u_2)K$ are the inputs. We call $K^*$ the 2-optimal kernel for $W$. 

A general kernel is a $2 \times 2$ invertible matrix over $\mF_q$. Let  
$K= 
   \bigl[\begin{smallmatrix} 
      a & b \\
      c & d \\
   \end{smallmatrix}\bigr]$ be such a matrix and let $(U_1, U_2)$ be i.i.d. under $\mu$ over $\mF_q$ and $(X_1, X_2)=(U_1, U_2)K$. Since $K$ is invertible, we have 
\begin{align}
2H(\mu)= H(U_1,U_2)=H(X_1,X_2)=H(X_1)+H(X_2|X_1)
\end{align}
and
\begin{align}
H(X_1)-H(\mu)= H(\mu)-H(X_2|X_1)
\end{align}
which is the entropy spread gained by using the transformation $K$.
To maximize the spread, one may maximize $H(X_1)= H(aU_1 + cU_2)$ over the choice of $a$ and $c$, or simply $H(U_1 + c U_2)$ over the choice of $c$. Hence, the maximization problem depends only on the variable $c$, ($a$ can be set to 1, and $b, d$ only need to ensure that $K$ is invertible), which
leads to a kernel of the form
$K= 
   \bigl[\begin{smallmatrix} 
      1 & 0 \\
     c & 1 \\
   \end{smallmatrix}\bigr]$.
Note that to maximize the spread, one may alternatively minimize $H(X_2|X_1)=H(U_2|U_1 + c U_2)$. 

We consider in particular channels which are `additive noise', in which case one can equivalently study the `source' version of this problem as follows:
\begin{align}
\lambda^*(\mu) = \arg\max_{ \lambda \in \mF_q } H(U_1+ \lambda U_2),   
\end{align}
where $U_1, U_2$ are i.i.d.\ under $\mu$. 
As discussed above, this is related with the previous problem by choosing $$K^*(W)= \begin{bmatrix} 
      1 & 0 \\
     \lambda^*(\mu) & 1 \\
   \end{bmatrix},$$ where $\mu$ is the distribution of the noise of the channel $W$. 


Our first observation about the optimal coefficients $\lambda^*(\mu)$ is in the context of 
$\mF_3$, and follows immediately from Theorem~\ref{thm:Z3}.

\begin{corol}
For a probability distribution $\mu$ over $\mF_3$, 
$$\lambda^*(\mu) = 2$$
if $\mu(1)\neq \mu(2)$, and $\lambda^*(\mu) = \{1,2\}$ if $\mu(1) = \mu(2)$.
\end{corol}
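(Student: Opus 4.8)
The plan is to read off the corollary from Theorem~\ref{thm:Z3}. Over $\mF_3$ the coefficient $\lambda$ ranges over $\{0,1,2\}$, so only the three numbers $H(U_1)$, $H(U_1+U_2)$, $H(U_1+2U_2)$ are relevant. The choice $\lambda=0$ gives $H(U_1)=H(\mu)=\max\{H(U_1),H(U_2)\}$, which by the lower bound in \eqref{eq: H upper-lower bound} is at most $H(U_1+U_2)$; so $\lambda=0$ is never strictly better than $\lambda=1$ (it ties $\lambda=1$ only when $\mu$ is a point mass or is uniform). Hence the question reduces to comparing $\lambda=1$ with $\lambda=2$. Since $2\equiv-1\pmod 3$, setting $X=U_1$ and $Y=U_2$ gives $U_1+U_2=X+Y$ and $U_1+2U_2=X-Y$, so Theorem~\ref{thm:Z3} yields $H(X+Y)\le H(X-Y)$. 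Thus $2\in\lambda^*(\mu)$ for every $\mu$, and all that remains is to decide when $1$ is also optimal, i.e.\ when $H(X+Y)=H(X-Y)$.

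One direction is trivial: if $\mu(1)=\mu(2)$ then $Y$ and $-Y$ have the same law (in $\mF_3$, $-1\equiv 2$ and $-2\equiv 1$), so $X+Y$ and $X-Y$ are identically distributed, forcing $H(X+Y)=H(X-Y)$ and hence $\lambda^*(\mu)=\{1,2\}$. For the converse I would reopen the proof of Theorem~\ref{thm:Z3}: there one shows, via Parseval, that $\|p\star p-U\|_2=\|p\star q-U\|_2$, while $p\star q$ is a symmetric distribution whose two smallest coordinates agree, a distribution which by Lemma~\ref{lem:triangular} sits at the maximal Euclidean distance from $U$ among all distributions on its entropy circle. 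Equality $H(p\star p)=H(p\star q)$ is therefore possible only if $p\star p$ sits at exactly that extremal point, in particular only if $p\star p$ is itself symmetric; one must then transfer this symmetry of $p\star p$ back to a symmetry of $p$, i.e.\ to the condition $\mu(1)=\mu(2)$.

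I expect this last transfer, together with the uniqueness of the extremal point just used, to be the main obstacle. Uniqueness requires the \emph{strict} form of the monotonicity in Lemma~\ref{lem:triangular} (not just the weak inequality recorded there), and the transfer then calls for a short analysis of the identities $(p\star p)_0=p_0^2+2p_1p_2$, $(p\star p)_1=2p_0p_1+p_2^2$, $(p\star p)_2=2p_0p_2+p_1^2$ to determine precisely which $p$ have a symmetric self-convolution. Everything before this point is routine bookkeeping once Theorem~\ref{thm:Z3} is available.
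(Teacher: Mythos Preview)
Your approach is the paper's approach: the paper offers no argument beyond the sentence ``follows immediately from Theorem~\ref{thm:Z3}'', and everything you write through ``Thus $2\in\lambda^*(\mu)$ for every $\mu$'' together with the easy direction of the equality case ($\mu(1)=\mu(2)\Rightarrow H(X+Y)=H(X-Y)$) is exactly what that sentence is meant to cover.

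Your instinct that the remaining direction is the real obstacle is correct---so correct, in fact, that the obstacle cannot be overcome, because the converse is false as stated. Take $\mu=(0.4,0.4,0.2)$, so $\mu(1)\neq\mu(2)$. Then $p\star p=(0.32,0.36,0.32)$ and $p\star q=(0.36,0.32,0.32)$ are cyclic shifts of each other, hence $H(X+Y)=H(X-Y)$ and $\lambda^*(\mu)=\{1,2\}$, not $\{2\}$. If you push through the Fourier analysis you outlined, writing $\hat p_1=re^{i\theta}$, the condition that $p\star p$ land at one of the three extremal ``outer corners'' on the common Euclidean circle becomes $e^{2i\theta}\in\{1,\omega,\omega^2\}$, i.e.\ $\theta$ a multiple of $\pi/3$; and since $\hat p_1=\tfrac{3p_0-1}{2}+i\tfrac{\sqrt3}{2}(p_2-p_1)$, this happens exactly when \emph{some} pair among $\mu(0),\mu(1),\mu(2)$ coincide, not only the pair $\mu(1),\mu(2)$. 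So your ``transfer'' step cannot deliver $\mu(1)=\mu(2)$; the corollary should read $\lambda^*(\mu)=\{2\}$ if and only if $\mu(0),\mu(1),\mu(2)$ are pairwise distinct. The paper's one-line proof never engages with this and is evidently relying only on the inequality and the trivial direction.
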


 
Figure \ref{imp3} illustrates the improvements of the error probability of a polar code using the kernel $\bigl[\begin{smallmatrix} 
      1 & 0 \\
      2 & 1 \\
   \end{smallmatrix}\bigr]$ instead of 
   $\bigl[\begin{smallmatrix} 
      1 & 0 \\
      1 & 1 \\
   \end{smallmatrix}\bigr]$ 
   for a block length $n=1024$ when the channel is an additive noise channel over $\mF_3$ with noise distribution $\{0.7, 0.3, 0\}$.
\begin{figure}
\begin{center}
\includegraphics[scale=.55]{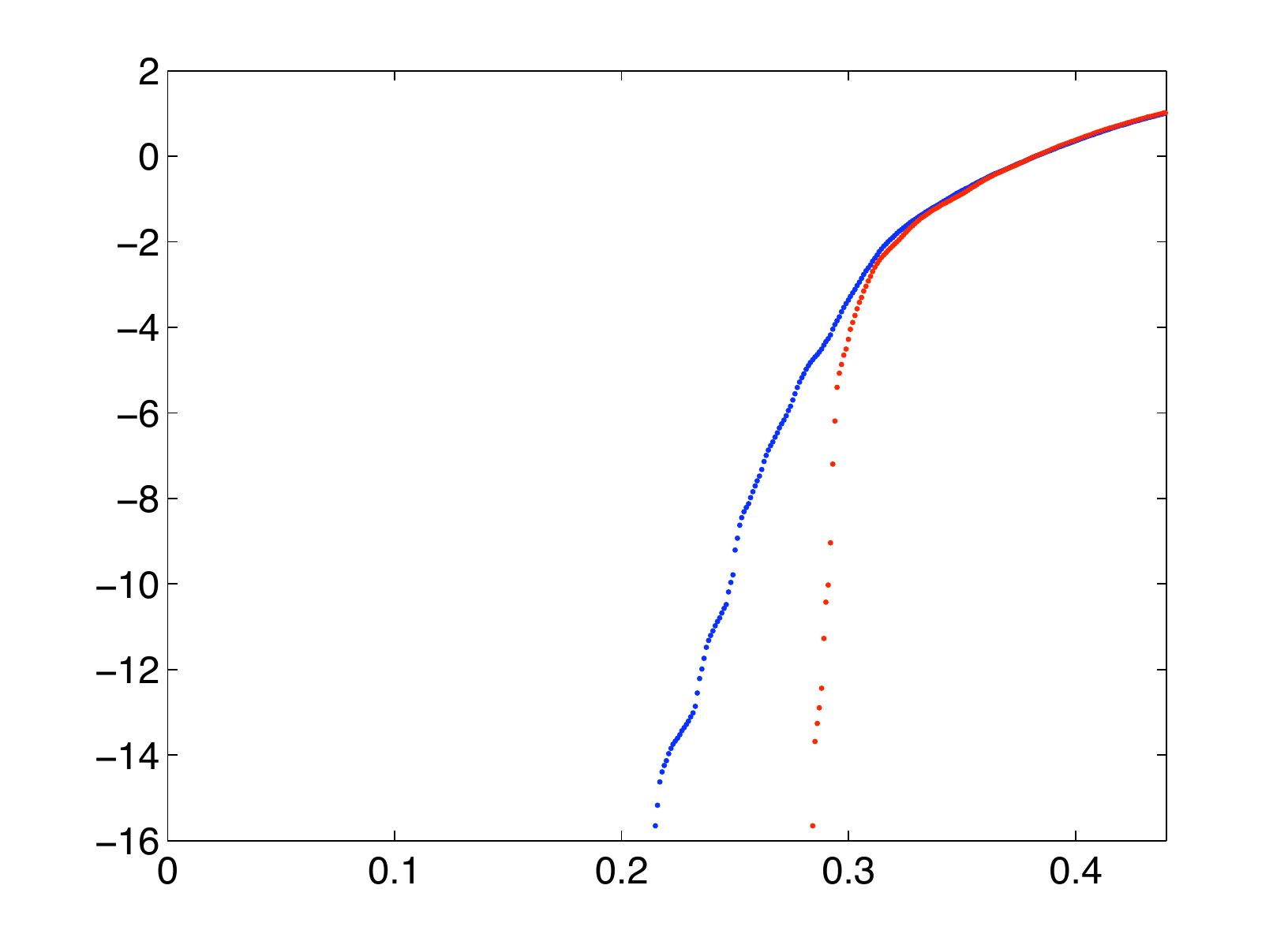}
\caption{For an additive noise channel over $\mF_3$ with noise distribution $\{0.7, 0.3, 0\}$, 
the block error probability (in $\log_{10}$ scale) of a polar code with block length of $n=1024$
is plotted against the rate of the code. The red curve (lower curve) is for the polar code 
using the 2-optimal kernel  whereas the blue curve is for the polar code using the original kernel.
}
\label{imp3}
\end{center}
\end{figure}

When $\mu$ is over $\mF_q$ with $q \geq 5$, $\lambda^*(\mu)$ varies with $\mu$.
For example, one can check numerically that for the distribution $\{0.8,0.1,0.1,0,0\}$ we have $\lambda^*=4$, whereas for 
the distribution $\{0.7,0.2,0.1,0,0\}$ we have $\lambda^*=\{2,3\}$.
Thus finding a solution to the problem of determining $\lambda^*(\mu)$
for general probability distributions $\mu$ on $\mF_q$ seems not so easy.
Nonetheless, for a certain class of probability distributions $\mu$, we can 
identify  $\lambda^*(\mu)$ explicitly using the following observation.


\begin{prop}\label{support}
Let $\mu$ be a probability distribution over $\mF_q$ with support $S_{\mu}$. If there exists $\gamma \in \mF_q$ such that 
\begin{align}
|S_{\mu} + \gamma S_{\mu}| = |S_{\mu}|^2 \label{cond}
\end{align}
then
\begin{align}
H(U_2| U_1 + \gamma U_2)  = 0
\end{align}
where $U_1, U_2$ are i.i.d. under $\mu$.
\end{prop}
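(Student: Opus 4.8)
The plan is to derive the statement from a one-line injectivity observation. Since $U_1$ and $U_2$ are i.i.d.\ under $\mu$, the pair $(U_1,U_2)$ takes values in $S_{\mu}\times S_{\mu}$ with probability one. Consider the map $\phi\colon S_{\mu}\times S_{\mu}\to\mF_q$ given by $\phi(a,b)=a+\gamma b$; its image is precisely the set $S_{\mu}+\gamma S_{\mu}$. Hypothesis \eqref{cond} asserts that $|\phi(S_{\mu}\times S_{\mu})|=|S_{\mu}|^2=|S_{\mu}\times S_{\mu}|$, and a map between finite sets of equal cardinality that is onto its image is a bijection; hence $\phi$ is injective on $S_{\mu}\times S_{\mu}$. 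Consequently there is a well-defined function $g\colon S_{\mu}+\gamma S_{\mu}\to S_{\mu}$ such that $g(a+\gamma b)=b$ for all $(a,b)\in S_{\mu}\times S_{\mu}$.

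Applying this with $(a,b)=(U_1,U_2)$ shows that $U_2=g(U_1+\gamma U_2)$ almost surely. A random variable that is, almost surely, a deterministic function of another one has zero conditional entropy given that other variable, so $H(U_2\mid U_1+\gamma U_2)=0$. Equivalently and more concretely: for each value $s$ attained by $U_1+\gamma U_2$, the conditional distribution of $U_2$ given $U_1+\gamma U_2=s$ is the point mass at $g(s)$, so $H(U_2\mid U_1+\gamma U_2=s)=0$; averaging over $s$ with respect to the law of $U_1+\gamma U_2$ gives the claim.

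There is no real obstacle in this argument; the only points requiring a modicum of care are that the injectivity must be asserted on the support $S_{\mu}$ rather than on all of $\mF_q$ (where $(a,b)\mapsto a+\gamma b$ is typically far from injective), and the handling of the almost-sure qualifier. The degenerate case $\gamma=0$ is automatically included, since then \eqref{cond} forces $|S_{\mu}|=1$ and every entropy in sight vanishes. For the intended application it is worth recording the immediate consequence: under \eqref{cond} the spread equals $H(U_1+\gamma U_2)-H(\mu)=H(\mu)-H(U_2\mid U_1+\gamma U_2)=H(\mu)$, which is the largest value the spread can take (as $H(U_2\mid U_1+\gamma U_2)\geq 0$ always), so any such $\gamma$ automatically belongs to the maximizing set $\lambda^*(\mu)$.
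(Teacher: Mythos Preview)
Your proof is correct and follows exactly the same approach as the paper's own proof: both observe that the cardinality hypothesis forces the map $(u_1,u_2)\mapsto u_1+\gamma u_2$ to be injective on $S_\mu\times S_\mu$, so $U_2$ is determined by $U_1+\gamma U_2$. Your version is simply a more detailed write-up of what the paper states in a single sentence, and your added remarks on the $\gamma=0$ case and the maximality of the spread match the paper's subsequent remark.
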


\begin{proof}
The condition $|S_{\mu} + \gamma S_{\mu}| = |S_{\mu}|^2$ ensures that knowing $u_1 + \gamma u_2$ with $u_1,u_2 \in S_{\mu}$ allows to exactly recover both $u_1$ and $u_2$.
\end{proof}

\begin{remark}
The condition on the support could be simplified but as such it makes the conclusion of Proposition~\ref{support} immediate. 
Also note that $\gamma$ such that $H(U_2| U_1 + \gamma U_2)  = 0$ is clearly optimal to maximize the spread, 
i.e., it maximizes $H(U_1 + \gamma U_2)$.  
\end{remark}

\begin{figure}
\begin{center}
\includegraphics[scale=.55]{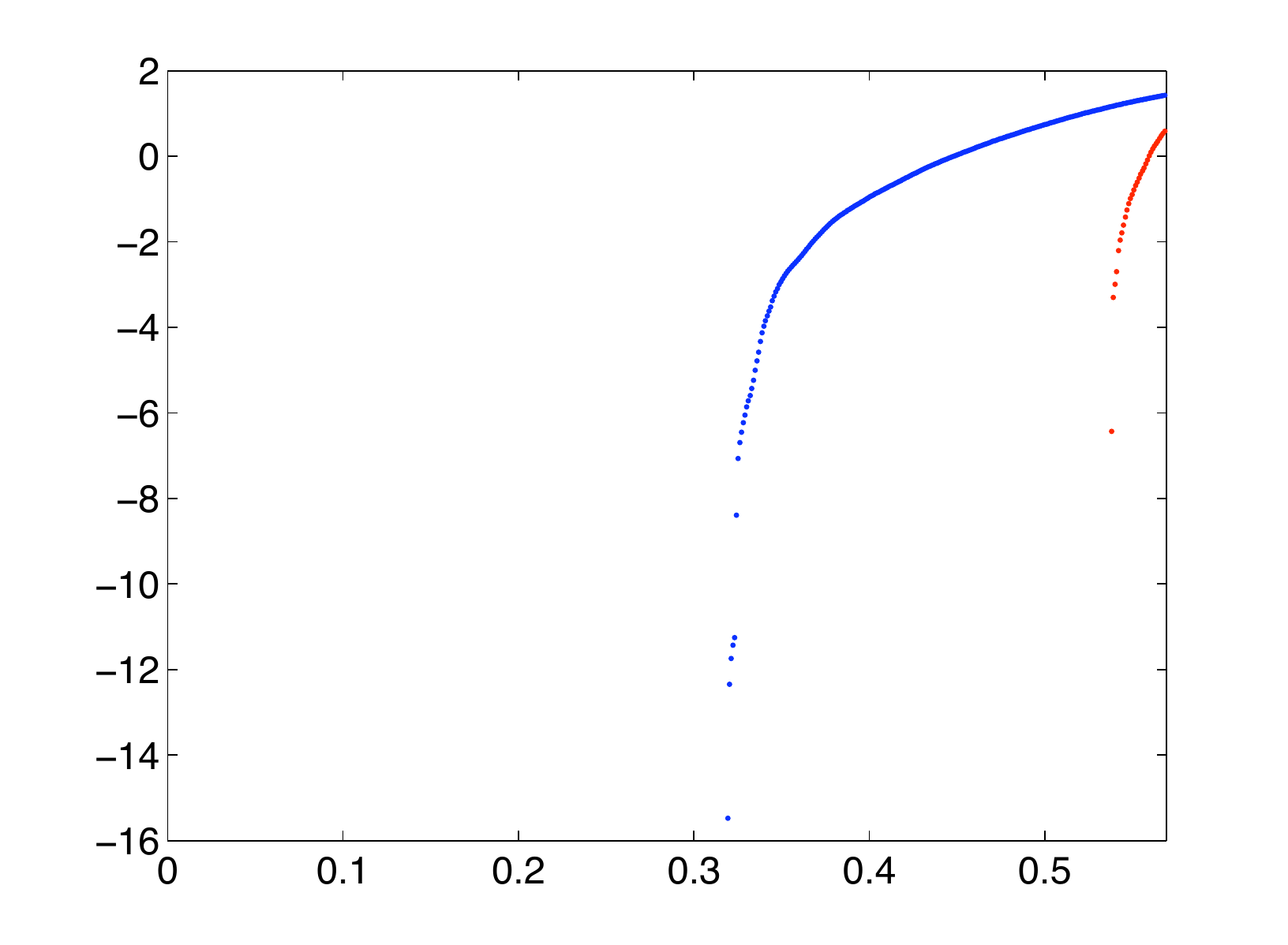}
\caption{For an additive noise channel over $\mF_5$ with noise distribution $\{0.5, 0.5, 0, 0, 0\}$
(i.e., taking any symbol of $\mF_5$ to itself with probability $\half$ and shifting any symbol circularly with probability $\half$), 
the block error probability (in $\log_{10}$ scale) of a polar code with block length of $n=1024$
is plotted against the rate of the code. The red curve (lower curve) is for the polar code 
using the 2-optimal kernel  whereas the blue curve is for the polar code using the original kernel.}
\label{gap5}
\end{center}
\end{figure}

Let us consider some examples of distributions satisfying \eqref{cond}:
\begin{enumerate}
\item Let $\mu$ over $\mF_5$ be such that $S_{\mu}=\{0,1\}$. Picking $\gamma = 2$, one obtains $2 S_{\mu}=\{0,2\}$ and $S_{\mu}+ 2 S_{\mu}=\{0,1,2,3\}$, and \eqref{cond} is verified. In this case, using $\gamma=1$ can only provide a strictly smaller spread since it will not set $H(U_2| U_1 + \gamma U_2)  = 0$. It is hence better to use the 2-optimal kernel $\bigl[\begin{smallmatrix} 
      1 & 0 \\
      2 & 1 \\
   \end{smallmatrix}\bigr]$ rather than the original kernel $\bigl[\begin{smallmatrix} 
      1 & 0 \\
      1 & 1 \\
   \end{smallmatrix}\bigr]$. As illustrated in Figure \ref{gap5}, this leads to significant improvements in the error probability at finite block length. Also note that a channel with noise $\mu$ satisfying \eqref{cond} has positive zero-error capacity, which is captured by the 2-optimal kernel as shown with the rapid drop of the error probability (it is $0$ at low enough rates since half of the synthesized channels have noise entropy exactly zero). 
If $\mu$ is close to a distribution satisfying \eqref{cond}, the error probability can also be significantly improved with respect to the original kernel $\bigl[\begin{smallmatrix} 
      1 & 0 \\
      1 & 1 \\
   \end{smallmatrix}\bigr]$. 
   
\item Over $\mF_{11}$, let $\mu$ be such that $S_{\mu}=\{0,1,2\}$. Picking $\gamma = 2$, one obtains $2 S_{\mu}=\{0,2,4\}$ and \eqref{cond} does not hold. However, picking $\gamma = 3$ leads to $3 S_{\mu}=\{0,3,6\}$ and \eqref{cond} holds. Therefore, the choice of $\gamma$ varies with respect to $q$.

\item Over general $\mF_q$, let $k=\lfloor \sqrt{q-1} \rfloor$. If $S_{\mu}=\{0,1,\dots, k-1\}$, we can see that $\gamma = k$ will satisfy \eqref{cond}.   
\end{enumerate}


In conclusion, we have shown that over $\mF_q$, the martingale spread can be significantly 
enlarged by using 2-optimal kernels rather than the original kernel $\bigl[\begin{smallmatrix} 
      1 & 0 \\
      1 & 1 \\
   \end{smallmatrix}\bigr]$. 
Moreover, we have observed that this can lead to significant improvements on the error probability of polar codes, even at low 
block length ($n=1024$). For additive noise channels, while the improvement is significant when the noise distribution is 
concentrated on ``small'' support, the improvement may not be as significant for distributions that are more more spread out.
   

\section*{Acknowledgement}
The authors would like to thank Emre Telatar for helpful comments and stimulating discussions.


\end{document}